\def\BibTeX{{\rm B\kern-.05em{\sc i\kern-.025em b}\kern-.08em
    T\kern-.1667em\lower.7ex\hbox{E}\kern-.125emX}}
\newtheorem{theorem}{Theorem}[section]
\newtheorem{lemma}[theorem]{Lemma}
\begin{document}
\let\WriteBookmarks\relax
\def\floatpagepagefraction{1}
\def\textpagefraction{.001}
\shorttitle{Empowering Blockchain-based Decentralised Applications Ecosystem}
\shortauthors{Nguyen Truong et~al.}

\title [mode = title]{A Blockchain-based Trust System for Decentralised Applications: When trustless needs trust}

\author[1]{Nguyen Truong}
\cormark[1]
\ead{n.truong@imperial.ac.uk}
\address[1]{Data Science Institute, South Kensington Campus, Imperial College London, London SW7 2AZ, United Kingdom}

\author[2]{Gyu Myoung Lee}
\address[2]{Department of Computer Science, Liverpool John Moores University, Liverpool L3 3AF, United Kingdom}
\ead{g.m.lee@ljmu.ac.uk}

\author[1]{Kai Sun}
\ead{k.sun@imperial.ac.uk}

\author[1]{Florian Guitton}
\ead{f.guitton@imperial.ac.uk}

\author[1, 3]{YiKe Guo}
\address[3]{Department of Computer Science, Hong Kong Baptist University, Kowloon Tong, Hong Kong}
\ead{y.guo@imperial.ac.uk}

\cortext[cor1]{Corresponding author}

\begin{abstract}
    Blockchain technology has been envisaged to commence an era of decentralised applications and services (DApps) without the need for a trusted intermediary. Such DApps open a marketplace in which services are delivered to end-users by contributors which are then incentivised by cryptocurrencies in an automated, peer-to-peer, and trustless fashion. However, blockchain, consolidated by smart contracts, only ensures on-chain data security, autonomy and integrity of the business logic execution defined in smart contracts. It cannot guarantee the quality of service of DApps, which entirely depends on the services' performance. Thus, there is a critical need for a trust system to reduce the risk of dealing with fraudulent counterparts in a blockchain network. These reasons motivate us to develop a fully decentralised trust framework deployed on top of a blockchain platform, operating along with DApps in the marketplace to demoralise deceptive entities while encouraging trustworthy ones. The trust system works as an underlying decentralised service providing a feedback mechanism for end-users and maintaining trust relationships among them in the ecosystem accordingly. We believe this research fortifies the DApps ecosystem by introducing an universal trust middleware for DApps as well as shedding light on the implementation of a decentralised trust system.
\end{abstract}

\begin{highlights}
\item Introduce a novel concept and provision of a universal decentralised trust system that can be integrated into any DApps sharing a same Blockchain platform.
\item Present a decentralised trust model with theoretical analysis, algorithms, and simulations.
\item Provide the whole agenda of the trust system development including technical solutions, implementation reference, as well as performance evaluation.
\end{highlights}

\begin{keywords}
Blockchain \sep DApps \sep Decentralised Ecosystem \sep Reputation \sep Trust System
\end{keywords}

\maketitle

\section{Introduction} \label{INT}
The turn of the last decade brought us to the disruptive Blockchain technology (BC) that provides a trusted infrastructure for enabling a variety of decentralised applications and services (DApps) without the need for an intermediary. To actualise this vision, Smart Contracts (SCs) technology is consolidated into the BC-based infrastructure: SCs are programmed to perform services' business logic, compiled into byte-code, and deployed onto a BC platform (i.e., replicated into full-nodes in the platform) so that a user can create transactions to execute the business logic implemented in the SCs in a decentralised fashion \cite{buterin2014}. This infrastructural BC platform offers some advanced features including immutability, transparency, trace-ability, and autonomy that are promising to effectively implement plentiful DApps from financial services (i.e., cryptocurrencies trading) to numerous services such as digital asset management \cite{ref03}, provenance tracking in logistics and supply-chain \cite{ref04, ref05}, and data sharing and processing in the Internet of Things (IoT) \cite{ref06, ref07}.

Indeed, various DApps have already been developed and employed into the real-world. For instance, there are over $4000$ DApps deployed on top of the Ethereum, Tron, and EOS platforms, serving about $150k$ active users daily in 2019\footnote{\url{https://cointelegraph.com/news/report-ethereum-tron-and-eos-dominated-dapp-ecosystem-in-2019}}. This is a considerable ecosystem and a huge decentralised peer-to-peer (P2P) marketplace. Although there are numerous challenges due to the limitation of the current BC technology hindering the advancement of DApps, we believe that \textit{"everything that can be decentralized, will be decentralized" - David A. Johnston}\footnote{\url{http://www.johnstonslaw.org}}. The DApps ecosystem is just in its preliminary state and will be the future of the next-generation Internet.

\subsection{Features of DApps}
There are different perspectives of DApps definition and system development among the cryptocurrency space. Nonetheless, mutual perceptions were pointed out that a DApp must satisfy some requirements: $(i)$ open source so that participants can audit the system, $(ii)$ application operations and data are recorded and executed in a decentralised BC (e.g., using SCs), and $(iii)$ a crypto token is used to access the service and to contribute to the operations (e.g., token reward) \cite{DJohnston, buterin2014daos}. As of these features, ideally, DApps have the ability to operate without human intervention and to be self-sustaining because the participation of stakeholders is continuously strengthening the systems.
According to \textit{Vitalik Buterin}, DApps generally fall into two overlay categories, namely fully anonymous DApps and reputation-based ones \cite{buterin2014daos}. The first category is DApps which participants are essentially anonymous and the whole service business logic is autonomously executed by a series of instant atomic operations. Pure financial services such as Bitcoin are examples of this. Another example is digital assets trading DApps such as software license, data, and digitised properties in which the ownership can be impeccably transferred once a contract (defined and implemented using SCs) has been performed \cite{truongicc2018}.

The second category refers to a type of DApps which business logic requires a reputation-like mechanism to keep track of participants' activities for trust-related purposes. For instance, DApps for data storage and computation, similar to $Dropbox$ and $Amazon$ $AWS$ in the centralised space, do require to maintain reputation-like statistic record of peers for service quality and security-related purposes (e.g., anti-DDoS). This requirement of trust is irrelevant to BC technology which supposedly ensures only data security (e.g, for distributed ledgers), autonomy and integrity of the business logic execution programmed in corresponding SCs. The quality of service (QoS) of such a DApp also depends on the service itself (i.e., how well the service handles the business logic defined in the SCs and caters to customers).

\subsection{Necessity of a Trust System in DApps Ecosystem}
DApps usage always comes with token movement from end-users to service contributors as a result of an incentive scheme, which is crucial to maintaining the service. However, due to the immutable nature, it is practically impossible to revoke any transaction once it is settled onto BC. Thus, a DApp has to make sure that end-users are dealing with trustworthy counter-parties before invoking any SCs' functions that can lead to a token payment. Intuitively, end-users tend to look for an indication of \textit{'assurance'} before using any services. Indeed, a variety of DApps share the same stance on a challenge of lacking a unified decentralised framework to evaluate the trustworthiness of participants (for instance, decentralised storage and computing (similar to cloud storage like $Dropbox$ and $Amazon$ $AWS$), home-sharing (similar to $Airbnb$), car-sharing (similar to $Uber$), or a hotel distribution and reservation service (similar to $Booking.com$) backed by a BC platform). Consequently, a trust middleware that supports DApps' end-users to transact with trustworthy counterparts is of paramount importance as it penalises deceptive participants while encouraging authentic ones. As illustrated in Fig. \ref{fig1}, DApps, built upon a BC platform empowered by a decentralised trust system, naturally build up trust with clients and create a virtuous cycles that bolster the whole DApps ecosystem growth.

\begin{figure}[!ht]
\centering
	\includegraphics[width=0.75\columnwidth]{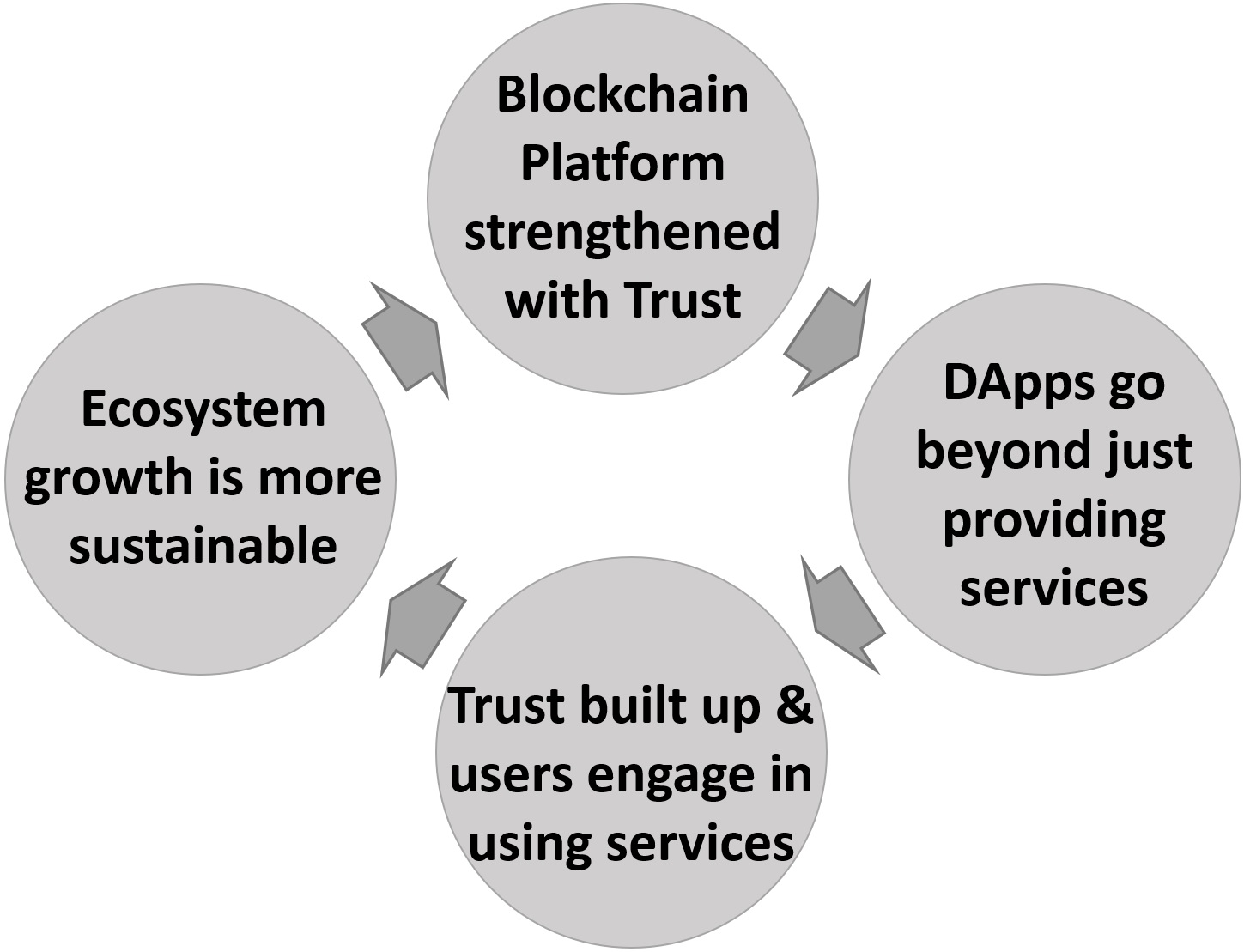}
		\caption{A BC platform strengthened with a trust system creates a virtuous cycle sustaining the DApps ecosystem growth}
\label{fig1}
\end{figure}

\subsection{Objectives and Contributions}
Our objectives are to envision and develop a universal decentralised system that operates along with any DApps to evaluate trust relationships between entities in the ecosystem. This trust system plays as middleware between a BC platform and DApps that provides mechanisms for DApps' end-users to build up and maintain a trust relationships network among the users. Operations of the system are fully decentralised, transparent, and accessible to all of the participants which are autonomously and flawlessly executed in a trustless fashion. It is also expected to effectively prevent from reputation attacks (e.g., Sybil, White-washing, Self-promoting, and Bad\&Good-mouthing) and to dismiss masquerading hostile participants.

The main contributions of this paper are three-fold:
\begin{itemize}
\item Introduction to the concept and provision of a universal decentralised trust system that can be integrated into any DApps sharing a same Blockchain platform.
\item A decentralised trust model with theoretical analysis, algorithms, and simulations.
\item Providing the whole agenda of the real-world development of the system including technical solutions, implementation reference, as well as performance evaluation.
\end{itemize}

The rest of the paper is organised as follows. Section II briefly brings up background and related work and presents the provision and conceptual model of a decentralised trust system. Section III describes a system design with a trust evaluation model for the proposed system. Section IV provides the algorithms and the theoretical analysis of the trust evaluation model. Section V is to discuss on the technical solutions and the implementation reference for the system development. Section VI is dedicated to the system analysis and discussion. Section VII concludes our work along with the future research directions.
\section{Decentralised Trust System Provision for DApps Ecosystem}
To craft a BC platform into a mature DApp development environment, fundamental elements must be incorporated such as an Identity Management (IdM), a name registry, a wallet, a P2P messaging for end-users, a browser, and a decentralised trust/reputation system \cite{buterin2014daos}. These elements are core built-in services of a BC-based infrastructure for DApps development.

\subsection{Related Work}
A large number of trust management mechanisms that have been proposed in various environments including social networks\cite{urena2019review}, P2P or ad-hoc networks \cite{almenarez2011trust}, and IoT \cite{yan2014survey, truong2017globecom, truong2019tifs}. Those trust models could be adapted to different scenarios including BC-related environment. However, as the emerging BC technology is in the early stage, there is limited research on trust management for DApps. Most of the related research is to develop a trust or reputation management platform leveraging the advantages of BC such as decentralisation, immutability, trace-ability, and transparency. In this respect, researchers have proposed BC-based trust mechanisms to fortify specific applications in various environments including vehicular networks and intelligent transportation systems \cite{yang2018blockchain, chen2020decentralized}, wireless sensor networks \cite{moinet2017blockchain, she2019blockchain}, or IoT \cite{debe2019iot, kochovski2019trust}. For instance, W. She \textit{et al.} in \cite{she2019blockchain} have proposed a BC-based trust model to detect malicious nodes in wireless sensor networks by implementing a voting mechanism on-chain, ensuring the trace-ability and immutability of voting information. M. Debe \textit{et al.} have developed a reputation-based trust model built on top of Ethereum platform for fog nodes in a Fog-based architectural system \cite{bonomi2012fog}. The idea is similar in that a reputation mechanism, comprising of several SCs, is implemented on top of Ethereum platform so that clients can give feedback as ratings toward a Fog node when using a service provided by such node. The reputation of a fog node is simply accumulated on-chain from users' ratings. Being executed on-chain, such ratings and reputation values are immutably recorded in a decentralised fashion, thus ensuring data integrity as well as preventing from Denial of Service (DDoS) attack.

We, instead, look at a different angle of trust in BC-based applications in which a trust system plays a complementary component of the BC platform that cooperates with DApps to empower the ecosystem built on top of the platform. We target to develop a trust system for decentralised services in a BC ecosystem (e.g., Ethereum) in which participants (clients and service providers) interact with each other on-chain in a P2P manner. Our system plays as a unified trust solution working with any DApps. Our previous research in \cite{truongicc2018} has presented an introductory concept of a unified trust system to strengthen a BC platform. However, it has come without detailed analysis, algorithm, and technical solutions for the development of the decentralised trust system. In this paper, we further explore the concept and the feasibility of a unified trust system as middleware between a BC platform and DApps, as well as provide a proof-of-concept of the decentralised trust system along with the system design, algorithms, technical solutions and implementation reference.

\subsection{High-level architecture of BC-based infrastructure and Trust System}
For a better understanding of the big picture of the whole BC-based infrastructure including the proposed trust system, we represent the high-level architecture of a full-stack IoT infrastructure by harmonising these components to the IoT and Smart Cities \& Communities reference model\footnote{\url{http://itu.int/en/ITU-T/studygroups/2017-2020/20/Pages/default.aspx}}. As can be seen in Fig. \ref{fig3}, the BC platform is located in the \textit{Service Support and Application Support} layer, which is a layer between the \textit{Application} and \textit{Network} layers in the IoT architecture. DApps is located in the \textit{Application} layer. Unlike client-server applications and services whose reputation/trust systems are separately developed, we envisage that DApps in the same ecosystem could leverage a universal trust system, which serves as a fundamental service for the BC-based infrastructure (Fig. \ref{fig3}). This trust middleware exists because DApps' end-users in an ecosystem are identified by the same IdM and a name registry, and use the same cryptocurrency (e.g., provided by a BC platform) to consume the services.

\begin{figure}[!ht]
\centering
	\includegraphics[width=\columnwidth]{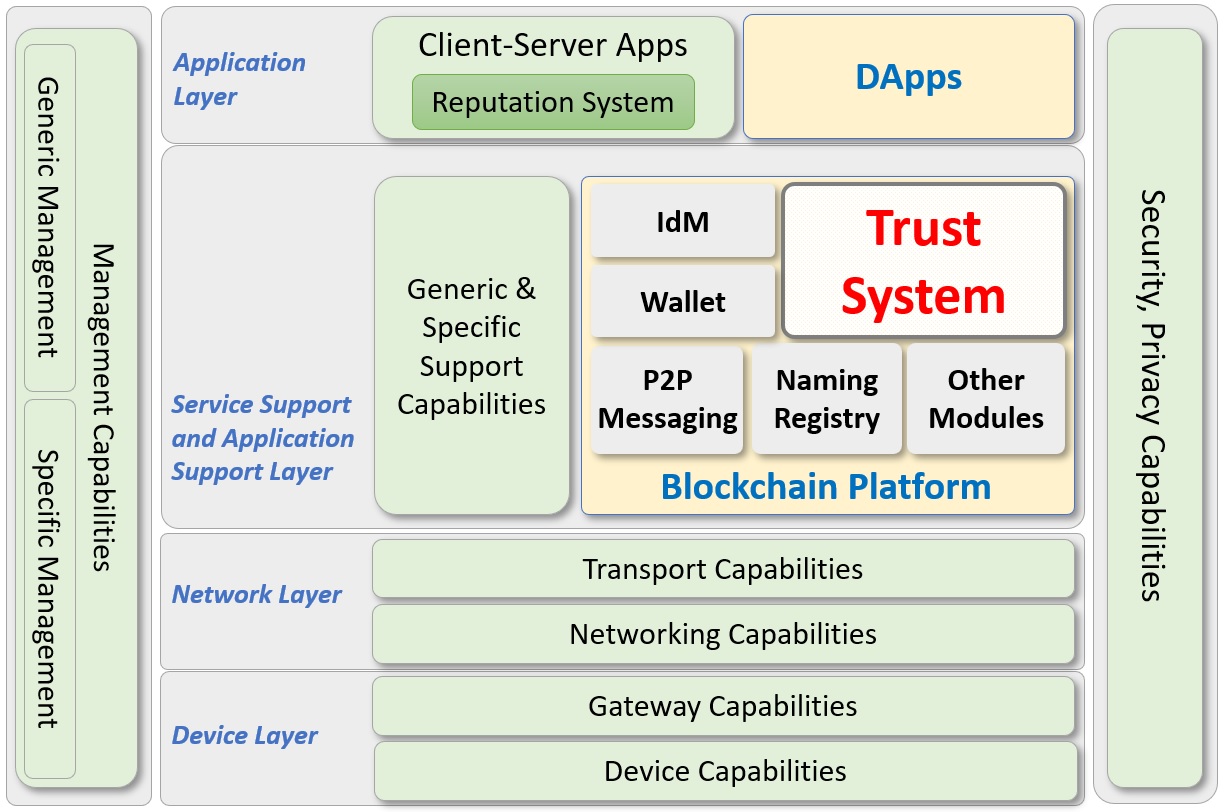}
		\caption{Functional model of a BC-based infrastructure comprising of a trust system and other elements in alignment with IoT high-level architecture.}
\label{fig3}
\end{figure}

\subsection{High-level Architecture of Trust System}
In this sub-section, fundamental elements of a decentralised trust middleware between a BC platform and DApps are described. As can be seen in Fig. \ref{fig4}, the proposed system consists of two basic components named \textit{Data Collection \& Extraction} and \textit{Trust Evaluation} that collect and aggregate necessary trust-related information and evaluate trust relationships, respectively. These two components are along with \textit{North-bound} and \textit{South-bound} APIs for providing trust-related services to DApps and for collecting data from a BC or applications and services, respectively.

\subsubsection{Trust Evaluation Mechanism}
We adopt the \textit{REK} trust model proposed in \cite{truong2017globecom, truong2019tifs} to the DApps ecosystem scenario in which both $trustors$ and $trustees$ are end-users of DApps. In the REK model, a trust relationship is evaluated by assembling three indicators called Reputation (of the trustee), Experience and Knowledge (of the trustor toward the trustee). In DApps scenarios, there is limited availability (or difficult to obtain) of off-chain information (i.e., information that is recorded outside BC) of end-users for evaluating Knowledge indicator as users' identity is normally pseudo-anonymised and challenging to link to outside world \cite{meiklejohn2013fistful}. Instead, transactions between end-users are immutably recorded (and publicly available) on-chain, which can be leveraged for Experience and Reputation evaluations. As a result, in this paper, we employ an adoption of the \textit{REK} trust evaluation model called \textit{DER} which only utilises two indicators Experience and Reputation in decentralised environment. Details of the \textit{DER} trust system is described in the next section.

\begin{figure}[!ht]
\centering
	\includegraphics[width=0.4\textwidth]{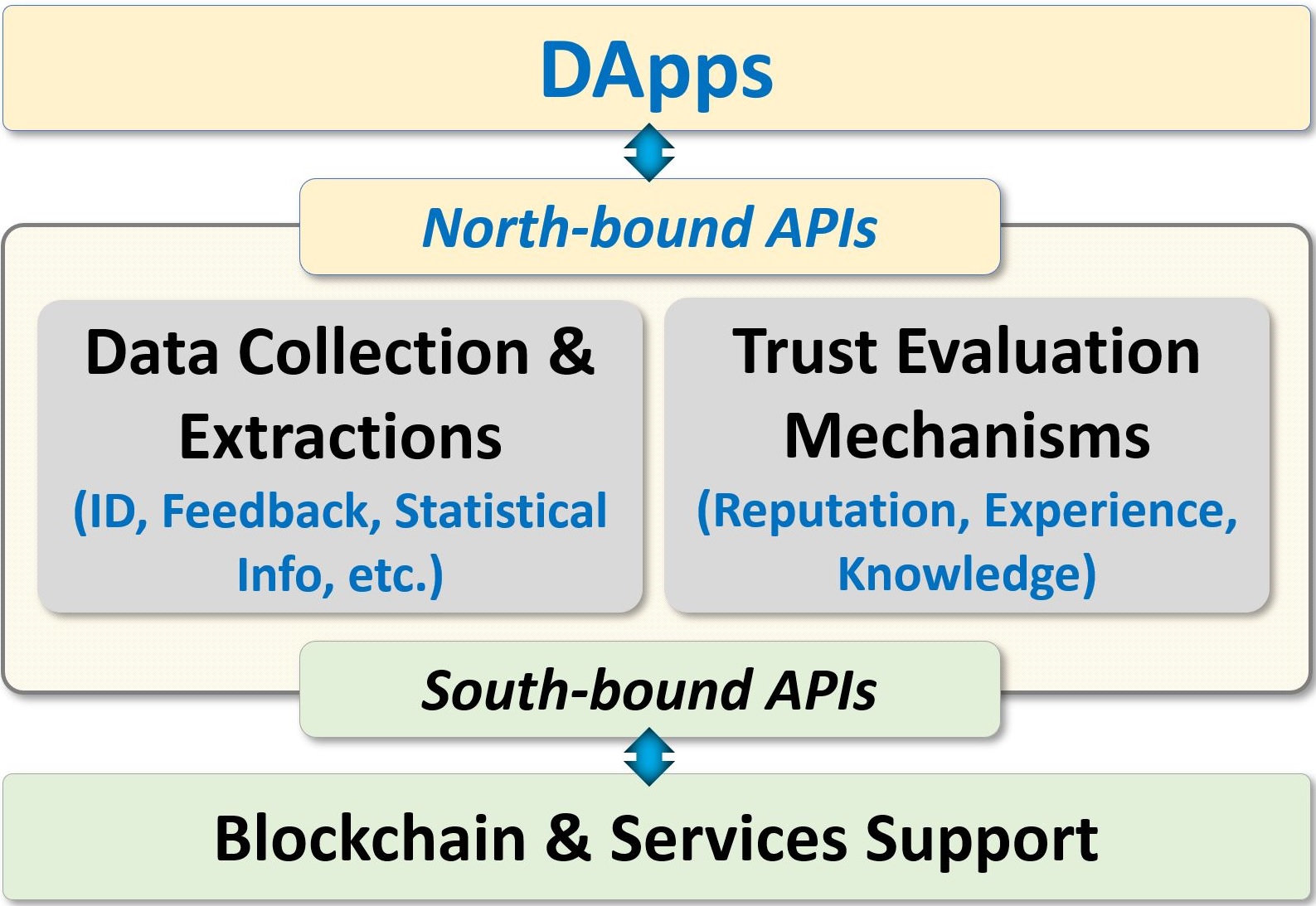}
		\caption{Conceptual model of the proposed trust system}
\label{fig4}
\end{figure}

Generally, after each transaction between entities in a DApp, the trust system enables a participant to give feedback toward its counterpart, thus establishing and updating the \textit{Experience} relationship between the two. By doing this, the trust system maintains an \textit{Experience} network among participants, which is publicly recorded on-chain. This \textit{Experience} network is autonomously updated whenever an entity gives feedback to the other. \textit{Reputations} of all participants are then calculated accordingly, following the idea of Google Rage-Rank algorithm. Finally, the trust value between two entity is calculated as composition between \textit{Experience} and \textit{Reputation}.

\subsubsection{Data Collection and Extraction}
By nature, a BC is a record of a continuous growing list of transactions among end-users which can be analysed to extract a network topology of end-user interactions. Nonetheless, further information about QoS is required to be collected and aggregated in order for the \textit{DER} trust evaluation mechanism to be performed. Therefore, a decentralised feedback mechanism associated with DApps in a BC platform is required to reflect QoS once end-users (e.g., service clients) successfully carry out transactions with their counterparts (e.g., DApp providers). This mechanism creates a \textit{distributed ledger} that logs users' feedback (toward a DApps service) along with the information about associated transactions (e.g., end-user ID ($from$ address), counterpart ID ($to$ address), and $timestamp$). Feedback can be either implicit or explicit which may or may not require human participation \cite{jawaheer2014feedback}. The trust system then extracts feedback and transactions information recorded in BCs as inputs for the \textit{DER} trust evaluation model (i.e., calculate the Experience and Reputation indicators) in order to evaluate trust relationships between any two peers in the decentralised ecosystem.
\section{System Design and DER Trust Model}

\subsection{Use-cases}
For better explanation and clarification, we scrutinise the decentralised data storage services (DDS), in regard to some projects being developed and implemented in the real-world like Storj\footnote{\url{https://storj.io}}, Sia\footnote{\url{https://sia.tech}}, and Filecoin\footnote{\url{https://filecoin.io}} (built on top of the InterPlanetary File System\footnote{\url{https://ipfs.io}} (IPFS)). Decentralised storage is a promising solution to cooperate or even to take over the conventional centralised cloud storage where data is split into multiple chunks and distributed to storage nodes across a P2P network. These storage nodes, as DDS providers, are expected to reliably store the data as well as provided reasonable network bandwidth with appropriate responsiveness for data owners to retrieve their data. As a reward, such storage nodes are incentivised by crypto tokens. It is worth noting that end-users in DApps ecosystem can be both data owners (DDS clients) and storage nodes (DDS providers). The decentralised storage concept is similar to the legacy P2P file sharing such as BitTorrent\footnote{\url{https://en.wikipedia.org/wiki/BitTorrent}} but fortified with advanced cryptography and encryption mechanisms as well as incentive schemes built upon a BC platform. It is expected to solve the long-standing challenges of single-point-of-control and -failure in centralised data silos, and to bring essential control of data back to the owners whilst discharging full control of cloud server managers.

\begin{figure}[!htbp]
\centering
	\includegraphics[width=0.48\textwidth]{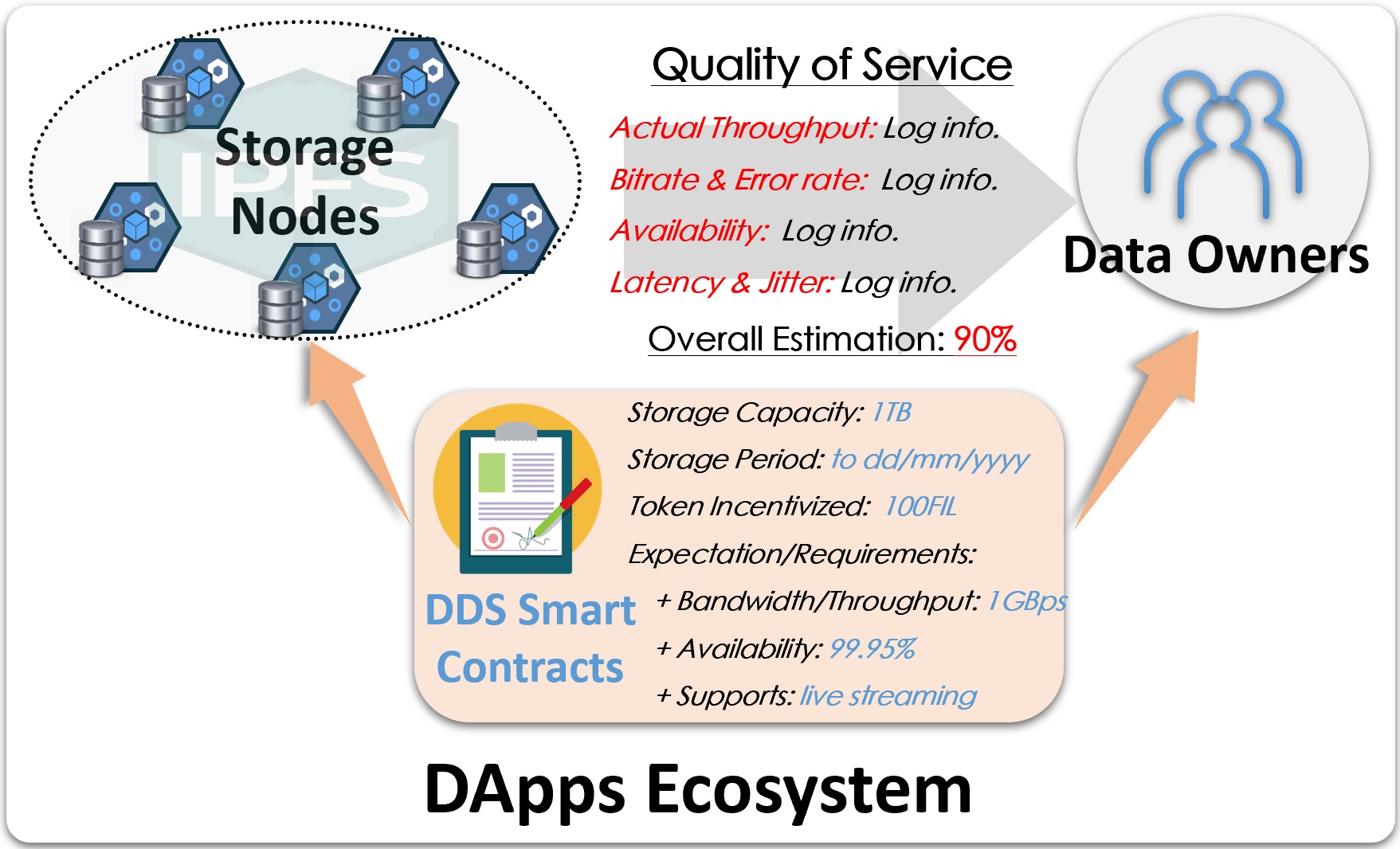}
	\caption{Decentralised storage service built on top of a BC platform that incentivizes storage nodes with crypto tokens.}
	\label{fig2}
\end{figure}

The DDS deploys necessary SCs on top of a BC platform to execute the business agreement between DDS clients (i.e., data owners) and DDS providers (i.e., storage nodes) such as \textit{storage space and period, guaranteed performance (e.g., availability, throughput, bandwidth, and latency)}, and the \textit{Incentive scheme} (i.e., \textit{Token Reward}) (Fig. \ref{fig2}). Unfortunately, such SCs are \textit{unable to ensure} the QoS of the DDS service provided by a set of storage nodes because \textit{(i)} it is impractical for the SCs to monitor and enforce the performance of the DDS providers, and \textit{(ii)} the guaranteed performance can only be measured once the SCs are already invoked. In this regard, a trust system that manages the performance history of the storage nodes and ranks them in order of trustworthiness (to provide high QoS) is of paramount importance.

\subsection{DER Trust Model}
In the proposed DER model, trust relationship between two entities is a compound of two elements: \textit{Experience} (of the trustor toward the trustee) and \textit{Reputation} (of the trustee). This section describes the mechanisms to calculate such two elements.

\subsubsection{Experience mechanism}
Experience is an asymmetric relationship from an entity to the another which is built up from previous transactions between the two. Experience is an indicator of trust \cite{truong2017globecom}. For instance, an experience (denoted as $Exp(A,B)$) is constituted from a DDS client (i.e., a data owner, denoted as $A$) to a DDS provider (i.e., a storage node, denoted as $B$) once $A$ invokes an SC to use the storage service offered by $B$. Higher $Exp(A,B)$ value represents higher degree of trust from $A$ to $B$. Essentially, $Exp(A,B)$ increases if $B$ provides high-quality storage service to $A$ (which is reflected by a feedback score $\vartheta_t$) and vice versa. It is worth noting that feedback can be provided by either clients (e.g., $A$) or an authorised third-party who is monitoring performance of service providers (e.g., $B$). Also, $Exp(A,B)$ gets decay if no transactions taken place after a period of time or a transaction is neutral (i.e., neither cooperative nor uncooperative). The amount of increase, decrease and decay depends on intensity of transactions, feedback scores $\vartheta$, and the current value of $Exp(A,B)$ which can be modelled by linear difference equations and a decay function as follows (notations are denoted in Table \ref{tb1_1}) \cite{truong2017globecom, truong2019tifs}:

\begin{table}[h]
	\centering
	\caption{NOTATIONS USED IN THE EXPERIENCE MODEL}
	\label{tb1_1}
	\begin{tabular}{c|p{0.75\columnwidth}}
			\textbf{Notation} & \textbf{Description} \\ [0.5ex]
		\hline
			$Exp_t$ & Experience value at time $t$, ${Exp_0}$ is the initial value \\
		\hline
			$min_{Exp}$ & minimum $Exp$ value, $min_{Exp} = 0$ if $Exp$ is normalised in [0,1] \\
		\hline
			$max_{Exp}$ & maximum $Exp$ value, $max_{Exp} = 1$ if $Exp$ is normalised in [0,1] \\
		\hline
			$\vartheta_t$ & Feedback score at time $t$ \\
		\hline
			$\alpha$ & Maximum increase value of $Exp$ in two consecutive transactions, $0<\alpha<max_{Exp}$ \\
		\hline
			$\beta$ & Decrease rate, $\beta>1$ \\
		\hline
			$\theta_{co}$ & Cooperative threshold for a feedback score $\vartheta_t$. A feedback is cooperative if $\vartheta_t \geq \theta_{co}$ \\
		\hline
			$\theta_{unco}$ & Uncooperative threshold for a feedback score $\vartheta_t$. A feedback is uncooperative if $\vartheta_t \leq \theta_{unco}$ \\
		\hline
			$\delta$ & Minimum Decay value ensuring any Experience relationship degenerates if it is not maintained\\
		\hline		
		  $\gamma$ & Decay rate controlling the amount of the decay\\
	\end{tabular}	\\[0.1ex]
\end{table}

\begin{itemize}
	\item \textbf{Increase model}
\end{itemize}
The current $Exp(A,B)$ (denoted as $Exp_{t-1}$) increases when there occurs a cooperative transaction (at the time $t$, indicated by the feedback score $\vartheta_t \geq \theta_{co}$) that follows the linear difference equation:

\begin{equation}
	\label{eq_1}
	Exp_t = Exp_{t-1} + \vartheta_t {\Delta}Exp_t
\end{equation}
where ${\Delta}Exp_t$ is defined as follows:
\begin{equation}
	\label{eq_2}
	{\Delta}Exp_t = \alpha(1 - \frac{Exp_{t-1}}{max_{Exp}})
\end{equation}

\begin{itemize}
	\item \textbf{Decrease model}
\end{itemize}
Similarly, $Exp(A,B)$ decreases if the transaction is uncooperative (indicated by the feedback score $\vartheta_t \leq \theta_{unco}$), following the equation:

\begin{equation}
	\label{eq_3}
	Exp_t = Max(min_{Exp}, Exp_{t-1} - \beta(1 - \vartheta_t){\Delta}Exp_t)
\end{equation}
in which ${\Delta}Exp_t$ is specified in Equation (\ref{eq_2}). The decrease rate $\beta>1$ implies that it is easier to lose the $Exp(A,B)$ value due to an uncooperative transaction than to gain it (by a cooperative transaction).

\begin{itemize}
	\item \textbf{Decay model}
\end{itemize}
$Exp(A,B)$ decays if there is no transaction after a period of time or a feedback is neutral (i.e., $\theta_{unco} < \vartheta < \theta_{co}$) and the decay rate is assumed to be inversely proportional to the strength of the experience relationship (i.e., $Exp_t$ value) \cite{roberts2009exploring}. Based on these observations, the Decay model is proposed as follows:

\begin{equation}
	\label{eq_4}
	Exp_t = Max(min_{Exp}, Exp_{t-1} - \Delta{Decay_t})
\end{equation}

\begin{equation}
	\label{eq_5}
	\Delta{Decay_t} = \delta{(1 + \gamma - \frac{Exp_{t-2}}{max_{Exp}})}
\end{equation}

\subsubsection{Reputation mechanism}
The reputation of an entity represents the overall perception of a community regarding the characteristic of the entity such as trustworthiness. In the DApps ecosystem, the reputation of an end-user $U$ (denoted as $Rep(U)$) can be calculated by aggregating $Exp(i, U)$, $\forall{i}$ are users who have already been transacted with $U$. To calculate the reputation of end-users, we utilise the model proposed in \cite{truong2017globecom, truong2019tifs} which is based on the standard PageRank \cite{originalpagerank} and the weighted PageRank \cite{xing2004weighted, weightedpagerank}.

Let $N$ be the number of end-users in the DApps ecosystem, an directed graph $G(V,E)$ is constructed in which $V$ is a set of $N$ users, $E \subseteq \{(x,y)|(x,y) \in V^2 \wedge x \ne y\}$ is set of edges representing experience relationship $E(x,y) = Exp(x,y)$. If there is no prior transaction between $(x,y)$; $E(x,y) = 0$. To enable the reputation model, $G(V,E)$ is divided into two sub-graphs: positive experience $PG(V,PE)$ in which any edge $PE(x,y) = Exp(x,y)$ satisfying $Exp(x,y)>\theta$ and negative experience $NG(V,NE)$ in which any edge $NE(x,y) = Exp(x,y)$ satisfying $Exp(x,y)<\theta$, where $\theta$ is a predefined threshold. $d$ parameter is a damping factor ($0<d<1$) introduced in standard PageRank \cite{originalpagerank}. The reputation for each sub-graph is then calculated as follows:

\begin{itemize}
	\item \textbf{Positive Reputation}
\end{itemize}
\begin{equation}
	\label{eq_6}
	Rep_{Pos}(U) = \frac{1-d}{N} + d(\sum_{\forall{i}}Rep_{Pos}(i)\times\frac{PE(i, U)}{C_{Pos}(i)})
\end{equation}
in which $C_{Pos}(i) = \sum_{\forall{j}}{PE(i,j)}$ representing the sum of all positive experience values that the end-user $i$ holds (toward other end-users).

\begin{itemize}
	\item \textbf{Negative Reputation}
\end{itemize}
\begin{equation}
	\label{eq_7}
	Rep_{Neg}(U) = \frac{1-d}{N} + d(\sum_{\forall{i}}Rep_{Neg}(i)\times\frac{1- NE(i, U)}{C_{Neg}(i)})
\end{equation}
in which $C_{Neg}(i) = \sum_{\forall{j}}{(1- NE(i,j))}$ representing the sum of all complements of negative experience values (i.e., $1 - NE(i, j)$) that the end-user $i$ holds (toward other end-users).

\begin{itemize}
	\item \textbf{Overall Reputation}
\end{itemize}
$Rep(U)$ is the aggregation of $Rep_{Pos}(U)$ and $Rep_{Neg}(U)$:
\begin{equation}
	\label{eq_8}
	Rep(U) = max(0, Rep_{Pos}(U) - Rep_{Neg}(U))
\end{equation}

\subsubsection{Trust Aggregation}
Trust relationship between trustor $A$ and trustee $B$ is a composite of $Exp(A,B)$ and $Rep(B)$:

\begin{equation}
	\label{eq_9}
	Trust(A, B) = w_1 Rep(B) + w_2 Exp(A, B)
\end{equation}
in which $w_1$ and $w_2$ are weighting factors satisfying $w_1 + w_2 = 1$. It is worth noting that any end-user once signing up for a DApp is assigned a default value at bootstrap (e.g., $\frac{1}{N}$). If $A$ and $B$ have no prior transaction then $Exp(A,B) = 0$. In this case, $w_1 = 1$ and $w_2 = 0$; thus, $Trust(A,B) = Rep(B)$. 
\section{Trust Model: Evaluation and Simulation}
This section provides detailed evaluation of the \textit{DER} trust model including model equations analysis, algorithms, and simulation of the Experience and Reputation models.

\subsection{Experience Model}
\subsubsection{Analysis}
For simplicity, $Exp$ values and feedback score $\vartheta$ are normalised to the range $(0,1)$ with $max_{Exp} = 1$, $min_{Exp} = 0$ and the initial value $0 < Exp_0 < 1$.

\begin{lemma}
	The Increase model defined in Equation \ref{eq_1} is \textit{(*)} a monotonically increasing function and \textit{(**)} \textit{asymptotic to $1$}.
\end{lemma}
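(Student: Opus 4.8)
The plan is to treat Equation \ref{eq_1} as a discrete dynamical system in $Exp_t$ and to exploit the normalisation $max_{Exp} = 1$, which collapses Equation \ref{eq_2} to $\Delta Exp_t = \alpha(1 - Exp_{t-1})$. Substituting this into Equation \ref{eq_1} gives the one-step update $Exp_t = Exp_{t-1} + \vartheta_t \alpha(1 - Exp_{t-1})$, which I would immediately rewrite in the convex-combination form $Exp_t = (1 - \vartheta_t \alpha)Exp_{t-1} + \vartheta_t \alpha$. Because the Increase model concerns cooperative transactions, I would use $\vartheta_t \geq \theta_{co} > 0$ together with the standing hypotheses $0 < \alpha < 1$ and $\vartheta_t \in (0,1)$ to conclude $0 < \vartheta_t \alpha < 1$, so that the update is a genuine convex combination of $Exp_{t-1}$ and the ceiling $1$.

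For part \textit{(*)} I would establish by induction on $t$ the invariant $Exp_{t-1} < Exp_t < 1$, starting from $0 < Exp_0 < 1$. The upper bound is preserved because a convex combination of two quantities that are at most $1$ stays strictly below $1$ whenever one of them is strictly below; the lower comparison is simply the sign of the increment $Exp_t - Exp_{t-1} = \vartheta_t \alpha(1 - Exp_{t-1})$, which is strictly positive precisely because the induction keeps $Exp_{t-1} < 1$. Hence the sequence is strictly increasing and bounded above by $1$.

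For part \textit{(**)} I would pass to the complement $1 - Exp_t$ and note that the convex-combination form yields the clean multiplicative recurrence $1 - Exp_t = (1 - \vartheta_t \alpha)(1 - Exp_{t-1})$. Unrolling this gives $1 - Exp_t = (1 - Exp_0)\prod_{k=1}^{t}(1 - \vartheta_k \alpha)$, so the asymptotic statement reduces to showing this product vanishes. Bounding each factor by $1 - \vartheta_k \alpha \leq 1 - \theta_{co}\alpha < 1$ produces $0 < 1 - Exp_t \leq (1 - \theta_{co}\alpha)^t(1 - Exp_0) \to 0$, hence $Exp_t \to 1$; equivalently, monotone-bounded convergence supplies a limit $L \leq 1$, and substituting $L$ into the fixed-point equation forces $L = 1$.

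The main obstacle is not the algebra but pinning down the correct hypothesis for the asymptotic claim: the recurrence depends on the entire feedback sequence $\{\vartheta_t\}$, so "asymptotic to $1$" holds only under the assumption that cooperative transactions continue to occur with $\vartheta_t \geq \theta_{co}$ (or, in the cleanest special case, a constant feedback $\vartheta_t = \vartheta$). Absent such a standing assumption the sequence is still increasing but could converge to a limit strictly below $1$. I would therefore state explicitly that the lemma pertains to the increase regime in which every transaction is cooperative, and observe that the geometric bound above additionally quantifies the rate $(1 - \theta_{co}\alpha)$ at which the ceiling is approached.
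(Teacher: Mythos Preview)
Your proposal is correct and follows essentially the same route as the paper: both derive the multiplicative recurrence $1 - Exp_t = (1 - \vartheta_t\alpha)(1 - Exp_{t-1})$, unroll it to the product $(1 - Exp_0)\prod_{k=1}^t(1 - \vartheta_k\alpha)$, and then bound each factor by $1 - \theta_{co}\alpha$ to obtain the geometric decay and the asymptote at $1$. Your convex-combination framing and inductive invariant for part \textit{(*)} are a slightly different packaging of the same computation the paper does when it first shows $0 < Exp_t < 1$ from the product formula and then reads off positivity of the increment.
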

 
\begin{proof}
	From Equation \ref{eq_1} and \ref{eq_2}, with $max_{Exp} = 1$, we have:
	\begin{equation}
		\label{eq_1x}
		Exp_t = Exp_{t-1} + (1 - Exp_{t-1})\vartheta_t\ \alpha
	\end{equation}

	Subtracting both sides of Equation \ref{eq_1x} from $1$:
	\begin{align}
		1 - Exp_t &= 1 - (Exp_{t-1} + (1 - Exp_{t-1})\vartheta_t\ \alpha) \nonumber \\
							&= (1 - Exp_{t-1})(1-\vartheta_t\ \alpha) \nonumber \\
							&= (1 - Exp_{t-2})(1-\vartheta_t\ \alpha)(1-\vartheta_{t-1}\ \alpha) \nonumber \\
							&= ... \nonumber \\
							&= (1 - Exp_0)\prod_{i=1}^{t} (1 - \vartheta_i\ \alpha)
		\label{eq_2x}
	\end{align}
	
	As $0 < Exp_0 < 1$, $0 < \alpha < max_{Exp} = 1$, and $0 < \vartheta_i < 1$ $\forall{i}$; from Equation \ref{eq_2x} we have $ 0 < Exp_t < 1$ $\forall{t}$. Therefore, $Exp_t$ function defined in Equation \ref{eq_1} is increasing as the increment value between $Exp_t$ and $Exp_{t-1}$ is $\vartheta_t\times{\Delta}Exp_t$ where ${\Delta}Exp_t = \alpha(1 - Exp_{t-1}) > 0$. Hence, Lemma (*) is proven.
	
	Furthermore, as Increase model is for cooperative transactions, meaning that $\vartheta_i \geq \theta_{co}; \forall{i} \in \{1,..,t\}$; from Equation \ref{eq_2x} we have:
	\begin{equation}
		\label{eq_3x}
		0 < 1 - Exp_t \leq (1 - Exp_0)(1 - \theta_{co}\ \alpha)^t
	\end{equation}

As $\theta_{co}$, $\alpha$, and $Exp_0$ are the three pre-defined parameters in the range $(0,1)$; therefore:
	\begin{equation}
		\label{eq_4x}
		\lim_{t\to\infty} (1 - Exp_0)(1 - \theta_{co}\ \alpha)^t = 0
	\end{equation}
	
Applying the Squeeze theorem on (\ref{eq_3x}) and (\ref{eq_4x}), we then have:
    \begin{equation}
		\label{eq_5x}
		\lim_{t\to\infty} (1 - Exp_t) = 0
    \end{equation}

In other word, the monotonically increasing $Exp_t$ function is \textit{asymptotic to $1$}; hence Lemma (**) is proven.
\end{proof}

As the Increase model is monotonically increasing, it is obvious that the Decrease model defined in Equation \ref{eq_3}, which is based on ${\Delta}Exp_t$ in Equation \ref{eq_2}, is decreasing. The decrements depend on the current $Exp_t$ value and the uncooperative $\vartheta_t$ feedback score. The decrease rate $\beta$ depicts the ratio of the decrements compared to the increments, which is normally greater than $1$ as the current experience $Exp_t$ is \textit{"difficult to gain but easy to loose"}.

The Decay model defined in Equation \ref{eq_4} ensures that an experience relationship gets weakened if there is no or neutral transactions after a period of time. This is because the decay value $\Delta{Decay_t}$ specified in Equation \ref{eq_5} is always $ > 0$ as $0 < Exp_{t-2} < 1$ $\forall{t \geq 2}$; and it is inversely proportional to $Exp_{t-2}$, implying that a strong relationship persists longer than a weak one.

\subsubsection{Algorithm and Simulation}
Based on the Experience model defined in Section $3.2.1$ along with the analysis, the algorithm calculates experience value $Exp(A,B)$ of entity $A$ toward entity $B$ is demonstrated in mathematical-style pseudo-code as in Algorithm \ref{alg_exp}. It is worth noting that the parameters controlling the Experience model are preset for our demonstration and should be optimised for specific scenarios.

\begin{algorithm}
    \small
	\SetKwInOut{Input}{Input}
	\SetKwInOut{Output}{Output}
	
	\Input{Current experience value $Exp_{t-1}$ \\ Previous experience value $Exp_{t-2}$ \\ Feedback score $\vartheta_t$}
	\Output{Updated experience value $Exp_t$}
	\BlankLine

	\textbf{Parameters Preset} \\
	    ${Exp_0} = 0.5$; \Comment{In case there is no prior transaction, $Exp_{t-1}$ and $Exp_{t-1}$ are set to $Exp_0$}\;
	    $min_{Exp} = 0$; $max_{Exp} = 1$; \Comment{Experience value is normalised in the range [0,1]}\;
	    $\theta_{co} = 0.7$; $\theta_{unco} = 0.5$; \\
	    $\alpha = 0.05$; $\beta = 1.6$; \\
	    $\delta = 0.005$; $\gamma = 0.005$ \\
    \BlankLine
    
    \textbf{Begin} \\
	\BlankLine
    \uIf{$\vartheta_t \geq \theta_{co}$}{ \Comment{Increase Model}\;
        $Exp_t = Exp_{t-1} + \vartheta_t \alpha (1 - \frac{Exp_{t-1}}{max_{Exp}})$ \\
        
    }
    \uElseIf{$0 < \vartheta_t \leq \theta_{unco}$}{ \Comment{Decrease Model}\;
        $Exp_t = Max(min_{Exp}, Exp_{t-1} - \beta(1 - \vartheta_t)\alpha(1 - \frac{Exp_{t-1}}{max_{Exp}})$ \\
        
    }
    \Else{\Comment{No transaction ($\vartheta_t = 0$) or neutral $\theta_{unco} < \vartheta_t < \theta_{co}$}\\
        \Comment{Decay Model}\;
        $Exp_t = Max(Exp_0, Exp_{t-1} - \delta{(1 + \gamma - \frac{Exp_{t-2}}{max_{Exp}})}$ \\
    }
    \BlankLine
    
	\textbf{Return} $Exp_t$
	\caption{Experience Calculation Algorithm}\label{alg_exp}
\end{algorithm}

\begin{figure}[!ht]
\centering
	\includegraphics[width=\columnwidth]{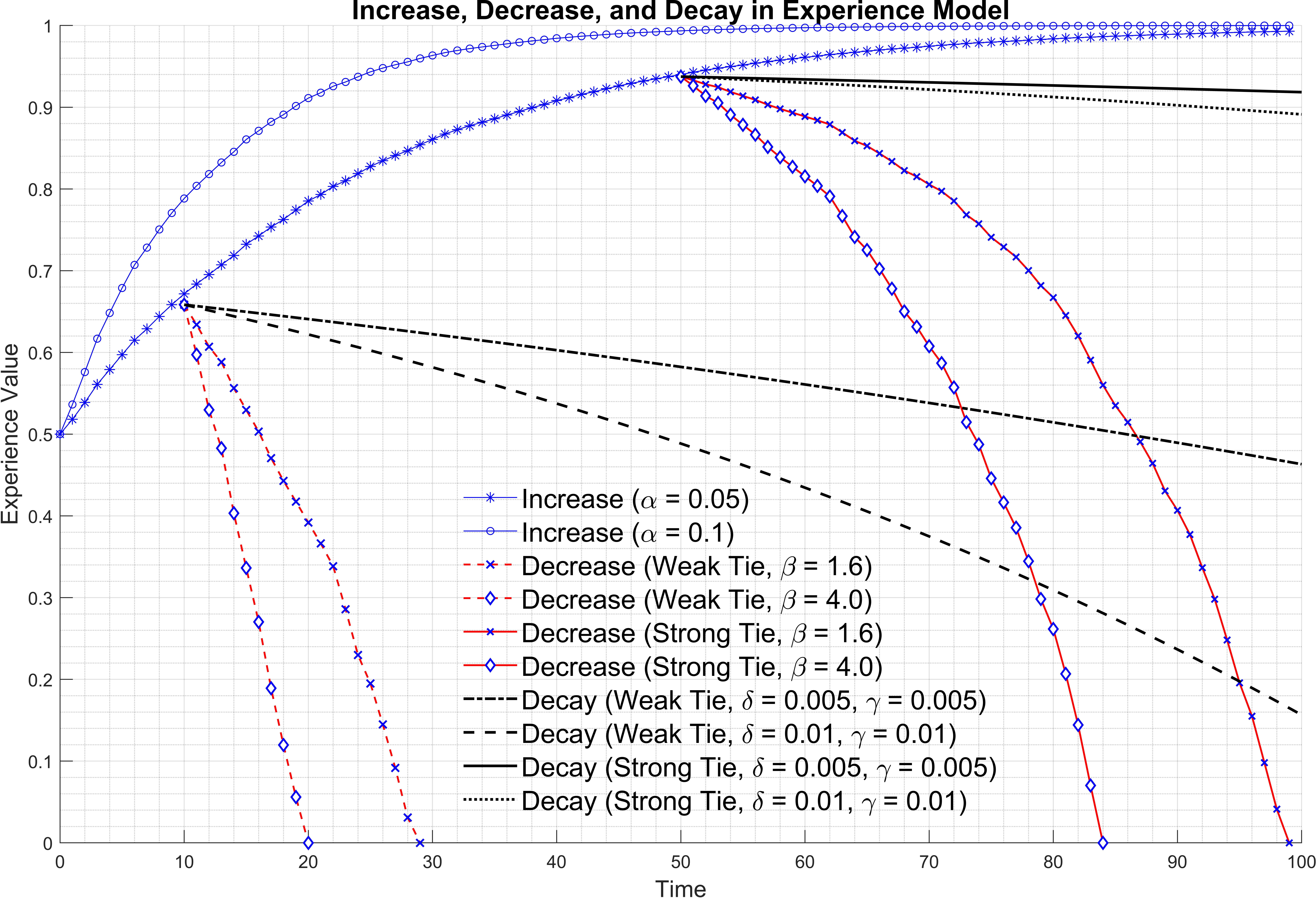}
		\caption{Increase, Decrease, and Decay in Experience relationship}
\label{fig5}
\end{figure}

For demonstration purposes, the algorithm is implemented in $Matlab$ with different controlling parameters settings. As depicted in Fig. \ref{fig5}, two sets of parameters configuration are taken into account in which the maximum increase value $\alpha$ is either $0.05$ or $0.1$, the decrease rate $\beta$ is either $1.6$ or $4.0$, and the parameter pair for the decay model ($\delta$, $\gamma$) is either ($0.005$, $0.005$) or ($0.01$, $0.01$). The initial value is preset $Exp_0 = 0.5$. As can be seen in Fig. \ref{fig5}, the results show that both increase model curves are asymptotic to $1$, which is already proven in the theoretical analysis, at different rates depending on the controlling parameter $\alpha$. The results also indicate that stronger experience relationships require more cooperative transactions to achieve. For instance, with $\alpha = 0.05$, experience value increases from $0.5$ to $0.7$ after $12$ consecutive transactions whereas it increases from $0.9$ to just $0.94$ after the same number of transactions.

The simulation results of the Decrease model show that experience relationships are prone to uncooperative transactions suggesting that a strong tie is hard to attain but easy to lose, particularly with higher decrease rate $\beta$. For instance, with $\alpha = 0.05$ and $\beta = 4.0$, it takes 50 consecutive cooperative transaction to increase the experience value from $0.5$ to $0.9$ but takes only 22 uncooperative transactions to drop from $0.9$ to $0.5$. As can also be seen from the figure, both decrease and decay models exhibit a same behaviour that a strong tie is more resistant to uncooperative transactions/decay whereas a weaker one is more susceptible. These characteristics of the experience model manifest the human social relationships, showing the practicability of the proposed model.

\subsection{Reputation Model}

\subsubsection{Analysis}
Denote $(N \times 1)$ column vectors $Rep$, $Rep_{Pos}$, and $Rep_{Neg}$ whose elements are overall reputation, positive reputation, and negative reputation of $N$ end-users in DApp ecosystem, respectively. As specified in Equation \ref{eq_6}, $Rep_{Pos}(U)$ of the user $U$ is calculated from others' positive reputations $Rep_{Pos}(i)$ $\forall{i}$ holding positive experience $PE(i,U)$ with $U$. Consequently, there would be correlations among the $N$ positive reputations, which would lead to the fact that $Rep_{Pos}$ might not exist or might be ambiguous (i.e., there exists more than one values for a user that satisfy Equation \ref{eq_6}). The same condition could happen for $Rep_{Neg}$, and for $Rep$, as a consequence.

\begin{lemma}
	The reputation vector $Rep$ \textit{exists} and is \textit{unique}.
\end{lemma}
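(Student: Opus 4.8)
The plan is to cast the defining relations (\ref{eq_6}) and (\ref{eq_7}) as linear fixed-point equations and then invoke the standard PageRank existence-and-uniqueness machinery. First I would collect the positive-reputation relation (\ref{eq_6}) into matrix form. Defining the $N \times N$ matrix $M_{Pos}$ with entries $(M_{Pos})_{U,i} = PE(i,U)/C_{Pos}(i)$, equation (\ref{eq_6}) becomes
\begin{equation}
	Rep_{Pos} = \frac{1-d}{N}\mathbf{1} + d\,M_{Pos}\,Rep_{Pos},
\end{equation}
equivalently $(I - d\,M_{Pos})\,Rep_{Pos} = \frac{1-d}{N}\mathbf{1}$, where $\mathbf{1}$ is the all-ones column vector. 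Existence and uniqueness of $Rep_{Pos}$ then reduce to invertibility of $I - d\,M_{Pos}$.

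The key observation is that $M_{Pos}$ is \emph{column-stochastic}. Each entry is non-negative, and since $C_{Pos}(i) = \sum_{j} PE(i,j)$, summing a column over $U$ yields $\sum_U PE(i,U)/C_{Pos}(i) = 1$. Hence the induced matrix $1$-norm (maximum absolute column sum) satisfies $\|M_{Pos}\|_1 = 1$, so $\|d\,M_{Pos}\|_1 = d < 1$. By the Neumann-series criterion, $I - d\,M_{Pos}$ is therefore invertible, with $(I - d\,M_{Pos})^{-1} = \sum_{k \geq 0} (d\,M_{Pos})^k$, giving the unique solution $Rep_{Pos} = \frac{1-d}{N}(I - d\,M_{Pos})^{-1}\mathbf{1}$. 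An identical argument applied to the matrix $M_{Neg}$ with $(M_{Neg})_{U,i} = (1 - NE(i,U))/C_{Neg}(i)$ — whose columns likewise sum to $1$ because $C_{Neg}(i) = \sum_j (1 - NE(i,j))$ — establishes existence and uniqueness of $Rep_{Neg}$.

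Finally, since (\ref{eq_8}) defines $Rep(U) = \max(0,\,Rep_{Pos}(U) - Rep_{Neg}(U))$ as a deterministic, entry-wise function of the two now uniquely-determined vectors, $Rep$ inherits existence and uniqueness immediately. The main obstacle I anticipate is not the spectral bound itself but justifying the column-stochastic property rigorously in the presence of \emph{dangling} users for whom $C_{Pos}(i) = 0$ (or $C_{Neg}(i) = 0$), which leaves the corresponding column undefined; I would patch these exactly as in standard PageRank, e.g.\ by redistributing any such column uniformly (assigning $1/N$ to every row), so that column-stochasticity — and hence the decisive bound $\|d\,M\|_1 = d < 1$ — is preserved. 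A secondary point worth one line is confirming that the hypothesis $0 < d < 1$ is precisely what forces the strict inequality, since at $d = 1$ the matrix $I - M$ is singular and uniqueness fails.
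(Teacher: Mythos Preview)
Your proof is correct and takes a genuinely different route from the paper. The paper recasts equation~(\ref{eq_6}) as an \emph{eigenvector} relation $Rep_{Pos} = A_{Pos}\,Rep_{Pos}$ with $A_{Pos} = \frac{1-d}{N}\mathcal{J} + d\,PE\,\mathcal{M}^{-1}$ (absorbing the constant term into the matrix via the implicit normalisation $\sum_U Rep_{Pos}(U)=1$), then observes that $P=A_{Pos}^{T}$ is the transition matrix of a random-surfer Markov chain over a strongly connected graph and cites the weighted-PageRank literature for existence and uniqueness of the stationary distribution. You instead keep the \emph{affine} form, isolate $(I - d\,M_{Pos})\,Rep_{Pos} = \frac{1-d}{N}\mathbf{1}$, and dispatch invertibility directly via the Neumann-series bound $\|d\,M_{Pos}\|_1 = d < 1$ on the column-stochastic $M_{Pos}$. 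Your argument is more self-contained and elementary --- no Perron--Frobenius or Markov-chain ergodicity is invoked, and it sidesteps the normalisation subtlety that the paper's eigenvector formulation quietly carries. The paper's approach, on the other hand, buys the random-surfer intuition and ties the model explicitly to the PageRank framework it is adapting. Your handling of dangling columns (uniform redistribution to preserve column-stochasticity) is more explicit than the paper's, which simply asserts that the graph is strongly connected with no dangling nodes; both amount to the same fix.
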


\begin{proof}
According to Equation \ref{eq_8}, $Rep$ \textit{exists} and is \textit{unique} if both $Rep_{Pos}$ and $Rep_{Neg}$ exist and are unique.

The positive experience $N \times N$ matrix $PE$ is constituted as follows:
\begin{equation}
	\label{eq_9x}
	PE(i,j) = 
		\begin{cases}
			Exp(i,j) & \text{if } Exp(j,i) \geq \theta \\
			0 & \text{if } Exp(j,i) < \theta \\
		\end{cases}
\end{equation}

Let us constitute an $N \times N$ diagonal matrix $\mathcal{M}$ whose diagonal elements $m_i = C_{Pos}(i), \forall{i} \in \{1,..,N\}$ and a matrix $\mathcal{J}$ is a $N{\times}N$ all-ones matrix.

Based on Equation \ref{eq_6}, $Rep_{Pos}$ can be represented in matrix notation as follows:
\begin{equation}
	\label{eq_10}
	Rep_{Pos} = (\frac{1-d}{N}{\times}\mathcal{J} + d\times{PE}{\times}\mathcal{M}^{-1}){\times}Rep_{Pos}		
\end{equation}

Let us define the $A_{Pos}$ matrix as follows:
\begin{equation}
	\label{eq_11}
	A_{Pos} = \frac{1-d}{N}{\times}\mathcal{J} + d\times{PE}{\times}\mathcal{M}^{-1}
\end{equation}

Thus, Equation \ref{eq_10} can be re-written:
\begin{equation}
	\label{eq_12}
	Rep_{Pos} = A_{Pos}{\times}Rep_{Pos}		
\end{equation}

From Equation \ref{eq_12}, we can see that $Rep_{Pos}$ is the $eigenvector$ of matrix $A_{Pos}$ with the $eigenvalue = 1$. Let us define a matrix $P = A_{Pos}^T$; thus $P^T = A_{Pos}$. Therefore, Equation \ref{eq_12} can be re-written as follows:
\begin{equation}
	\label{eq_12x}
	Rep_{Pos} = P^T{\times}Rep_{Pos}
\end{equation}

Equation \ref{eq_12x} implies that $Rep_{Pos}$ is the \textit{stationary distribution} of a $Markov$ chain whose transition probability matrix is $P$. Let us constitute a discrete-time $Markov$ chain with the transition probability matrix $P = A_{Pos}^T$ consisting of $N$ states and the probability to move from state $i$ to state $j$ is $P(i,j)$. Note that $\forall{i, j} \in \{1, ..,N\}$, we have:

\begin{equation}
	\label{eq_13}
	P(i,j) = A_{Pos}^T(i,j) = A_{Pos}(j,i) = \frac{1-d}{N} + d\times\frac{PE(j,i)}{m(j)}
\end{equation}

The Markov chain can then be constructed as follows:
\begin{equation}
	\label{eq_14}
	P(i,j) = 
		\begin{cases}
			\frac{1-d}{N} + d\times\frac{PE(j,i)}{m(j)} & \text{if } Exp(j,i) \geq \theta \\
			1 - (\frac{1-d}{N} + d\times\frac{PE(j,i)}{m(j)}) & \text{if } Exp(j,i) < \theta \\
		\end{cases}
\end{equation}
where $\theta$ is the threshold to differentiate positive and negative experiences. This $Markov$ chain is a model of \textit{random surfer} with \textit{random jumps} over the experience relationships directed graph $G(V, E)$ \cite{page1999pagerank, blum2006random, chebolu2008pagerank}. The graph $G(V, E)$ is strongly connected with no dangling nodes. This is because any two nodes $(x,y)$ with no prior transaction is set $Exp(x,y) = 0$, implying that the edge weight is 0; it does not mean there is no connection. This random surfer Markov chain, apparently, is a weighted PageRank model; as a result, its \textit{stationary distribution}, $Rep_{Pos}$, exists and is \textit{unique} \cite{blum2006random, chebolu2008pagerank, haveliwala2003analytical}.

Similarly, $Rep_{Neg}$ vector \textit{exists} and is \textit{unique}. Therefore, the overall reputation vector $Rep$ \textit{exists} and is \textit{unique}.
\end{proof}

\subsubsection{Algorithm and Simulation}
As the existence and the uniqueness are proven, the reputation vector $Rep$ of $N$ end-users in DApps ecosystem can be calculated by solving the matrix equations defined in Equations \ref{eq_6}, \ref{eq_7}. The traditional algebra method to solve an $NxN$ matrix equation (e.g., Equation \ref{eq_6} or Equation \ref{eq_7}), whose the complexity is $\mathcal{O}(N^3)$, is impractical when the size of the DApp ecosystem is enormous (e.g., in millions). Instead, the reputations of the $N$ end-users can be approximately calculated with a predefined accuracy tolerance using an iterative method, which is much more efficient \cite{arasu2002pagerank, kamvar2003extrapolation}. Thus, the latter approach is utilised to solve Equations \ref{eq_6} and \ref{eq_7}, demonstrated by the following pseudo-code (Algorithm \ref{alg_rep}). As defined in Equation \ref{eq_8}, the overall reputation for $N$ end-users (i.e., $N \times 1$ column vector $Rep$) is then simply obtained by adding two vectors $Rep_{Pos}$ and $Rep_{Neg}$, which are the outputs of Algorithm \ref{alg_rep}.

\begin{algorithm}
    \small
	\SetKwInOut{Input}{Input}
	\SetKwInOut{Output}{Output}
	
	\Input{$(N \times N)$ matrix $E$ (set of edges in the directed graph $G(V,E)$ of $N$ end-users) \\ Positive reputation $N \times 1$ column vector $Rep_{Pos}$ \\ Negative reputation $N \times 1$ column vector $Rep_{Neg}$}
	
	\Output{Updated $Rep_{Pos}$ and $Rep_{Neg}$}
	\BlankLine

	\textbf{Parameters Preset} \\
	    $\mathscr{d} = 0.85$; \Comment{damping factor in standard PageRank}\\
	    $tol = 1e-5$; \Comment{Error tolerance}\\
	    $thres = 0.5$; \Comment{threshold for positive and negative experience}\\
    \BlankLine
    
    \textbf{Begin} \\
    \Comment{Elicit matrices $PE$ and $NE$ from matrix $E$}\;
    $PE = zeros(N, N)$; \Comment{initialise zero matrix for $NE$} \\
    $PE = zeros(N, N)$; \Comment{initialise zero matrix for $PE$}
    
    \For{$i \gets 1$ to $N$}{
        \For{$j \gets 1$ to $N$}{
            \uIf{$E(i,j) \geq thres)$} {
                $PE(i,j) = E(i,j)$
            }
            \ElseIf{$0 < E(i,j) < thres$} {
                $NE(i,j) = 1 - E(i,j)$
            }
        }
    }
	\BlankLine
	
    \Comment{Constitute $1 \times N$ row vectors $C_{Pos}$ and $C_{Neg}$}\;
    $C_{Pos} = zeros(1, N)$; \Comment{initialise zero vector for $C_{Pos}$} \\
    $C_{Neg} = zeros(1, N)$; \Comment{initialise zero vector for $C_{Neg}$}
    
    \For{$i \gets 1$ to $N$}{
        \For{$j \gets 1$ to $N$}{
             $C_{Pos}(1,i) = C_{Pos}(1,i) + PE(i,j)$; \\
             $C_{Neg}(1,i) = C_{Neg}(1,i) + NE(i,j)$;
        }
    }
	\BlankLine

    \Comment{Constitute transition matrices of $PE$ and $NE$}\;
        \For{$i \gets 1$ to $N$}{
        \For{$j \gets 1$ to $N$}{
            \uIf{$PE(j,i) > 0)$} {
                $A_{Pos}(i, j) = \frac{PE(j, i)}{C_{Pos}(1, j)}$; \Comment{Transition matrix for PE}
            }
            \uIf{$NE(j,i) > 0)$}{
                $A_{Neg}(i, j) = \frac{NE(j, i)}{C_{Neg}(1, j)}$; \Comment{Transition matrix for NE}
            }
        }
    }
	\BlankLine
	
    \Comment{Update $Rep_{Pos}$ and $Rep_{Neg}$ based on Equations \ref{eq_6} and \ref{eq_7}}\;

    $I = ones(N, 1)$; \Comment{create vector of all ones} \\
    $err = 1$; \Comment{Total error of the current iteration} \\
    \While{$err \geq tol$}{
        $temp_{Pos} = \mathscr{d} \times A_{Pos} \times Rep_{Pos} + \frac{(1-\mathscr{d})}{N} \times I$; \\
        $temp_{Neg} = \mathscr{d} \times A_{Neg} \times Rep_{Neg} + \frac{(1-\mathscr{d})}{N} \times I$; \\
        \BlankLine
        
        \Comment{update $err$, $\mathcal{N}(v)$ is the Euclidean norm of vector $v$}\;
        $err = \mathcal{N}(temp_{Pos} - Rep_{Pos}) + \mathcal{N}(temp_{Neg} - Rep_{Neg})$; \\
	    \BlankLine
	    
        $Rep_{Pos} = temp_{Pos}$; \Comment{update $Rep_{Pos}$ vector} \\
        $Rep_{Neg} = temp_{Neg}$; \Comment{update $Rep_{Neg}$ vector} \\
    }
    
	\textbf{Return} [$Rep_{Pos}$, $Rep_{Neg}$]
	\caption{Reputation algorithm using iterative method}\label{alg_rep}
\end{algorithm}

\begin{figure}[!ht]
\centering
	\includegraphics[width=\columnwidth]{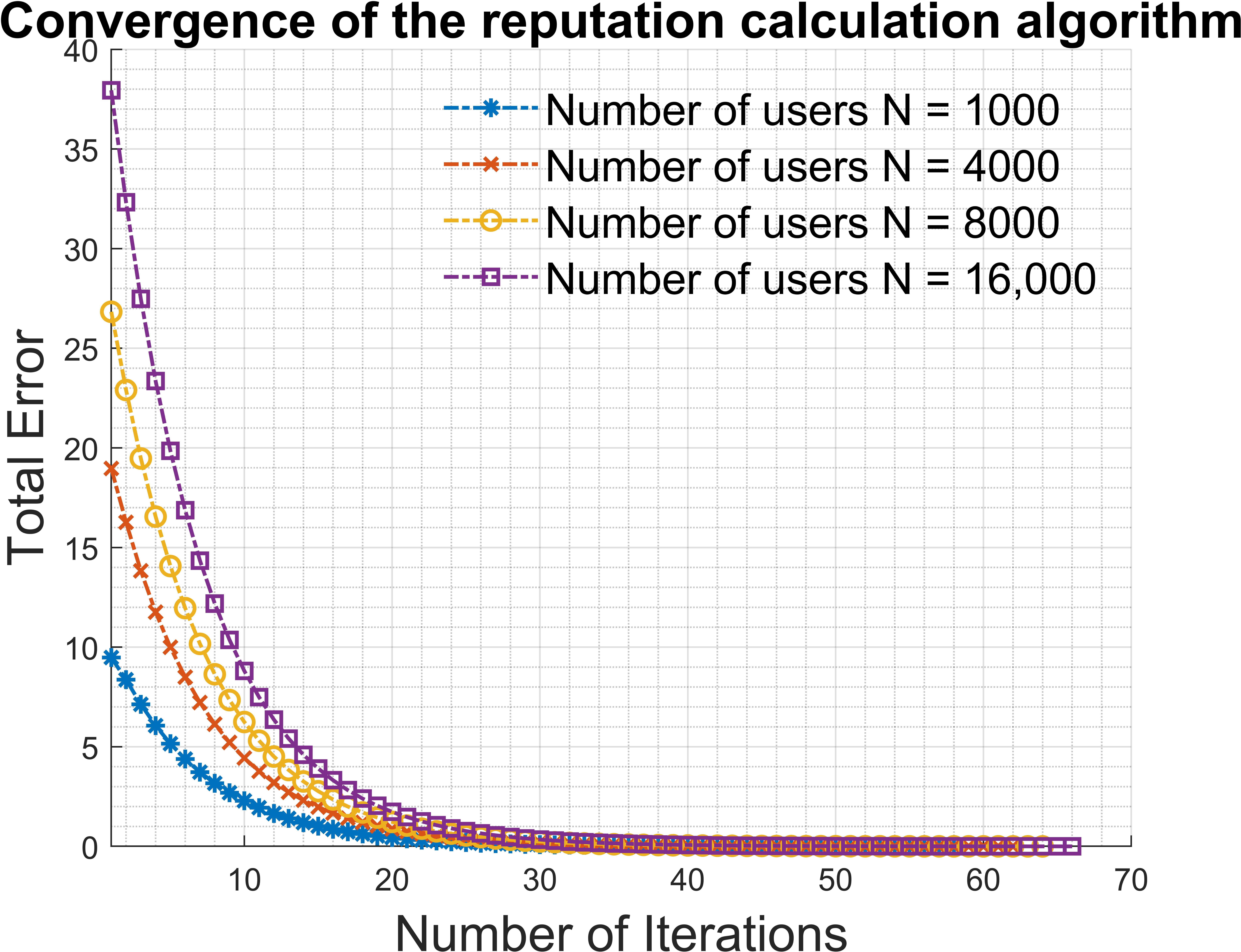}
		\caption{Convergences of the reputation algorithm using interactive method with different sizes of DApp ecosystem}
\label{fig6}
\end{figure}

The simulation of the proposed reputation calculation algorithm are conducted for different DApp ecosystem sizes (i.e., $N=1000$, $4000$, $8000$ and $16,000$) with the \textit{error tolerance} $tol = 10^{-5}$, which is accurate enough to rank $N$ end-users in the DApp ecosystem. As depicted in Algorithm \ref{alg_rep}, the total error $err$ is calculated as the Euclidean norm of the vector difference of the $Rep$ vector in two consecutive iterations. Fig. \ref{fig6} illustrates the convergence rate of the algorithm, showing the rapid reduction of the total error as more iterations are carried out. As can be seen from the figure, the algorithm converges in less than $70$ iterations (to be exact: $54$, $61$, $64$, and $66$ iterations) for four DApps ecosystem sizes $N=1000$, $4000$, $8000$ and $16,000$, respectively. These results suggests that the reputation model well scales for a huge network as the scaling factor is roughly linear in $log{N}$.

\section{Technical Solutions and Implementation}\label{imp_section}
This section provides a real-world demonstration for the proposed decentralised trust system and how a decentralised storage service interacts with it. The demonstration is carried out on top of the Ethereum permissionless BC platform in which system components, functionality, technical challenges and solutions are identified as the implementation reference for developers who wish to build a similar system. Source-code of the demonstration can be found here\footnote{\url{https://github.com/nguyentb/Decentralised\_Trust\_Eth\_IPFS.git}}. Smart Contracts source-code is in the $/packages/ethereum-core$ folder of the repository.

\subsection{System Setup}
The DDS service and the proposed decentralised trust system are implemented on top of the permissionless Ethereum platform to which fundamental elements for developing a DApp have already been deployed. For instance, in our platform setup, Ethereum $account$ and $address$ are leveraged for IdM, $Metamask$\footnote{\url{https://metamask.io/}} is for BC browser and a wallet service, and $web3/web3j$\footnote{\url{https://github.com/web3j/web3j}} are DApps APIs for interacting with Ethereum network (e.g., SCs and end-users). SCs are implemented in Solidity using Truffle suite framework\footnote{\url{https://truffleframework.com}} and deployed in an Ethereum test-net (i.e., we use several test-nets including $Ropsten$, $Kovan$ $Rinkeby$, and $Goerli$) for real-world experience. We assume that IPFS storage nodes are also clients of the DApps ecosystem (e.g., Ethereum clients in $Ropsten$, $Kovan$ or $Rinkeby$ test-net) that get incentivised once providing storage capability (e.g., IPFS storage nodes $host$ and $pin$ the hash of requested files from data owners).

The overall procedure of the setting system is illustrated in Fig. \ref{fig7}. As can be seen in the sequence diagram, a client starts to use the DDS service by making a transaction to a DDS SC (step \textbf{(1)}), which invokes \textbf{enFeedback} function in \textit{\textbf{FeEx}} SC of the trust system to grant the client permission to give feedback to the DDS nodes ((step \textbf{(3)}), \textbf{(4)})). Once getting feedback from the end-user (step \textbf{(5)}), experience relationships between the user and the DDS nodes are updated on-chain by executing \textbf{expCal} function in \textit{\textbf{FeEx}} SC (step \textbf{(6)}). On the contrary, as the reputation calculation is resource-intensive, it is impractical to implement the algorithm (i.e., Algorithm \ref{alg_rep}) on-chain; instead, only the results (i.e., reputation values of entities) are publicly recorded on-chain. This challenge can be circumvented by using Oraclize service, as demonstrated in step \textbf{(7-1)}, \textbf{(7-2)}, and \textbf{(7-3)} in Fig. \ref{fig7}. With the same reason, \textbf{Rep} SC is not invoked whenever an experience relationship is updated; instead, it is periodically self-executed - for example, for every 100 blocks.

\begin{figure}[!ht]
\centering
	\includegraphics[width=\columnwidth]{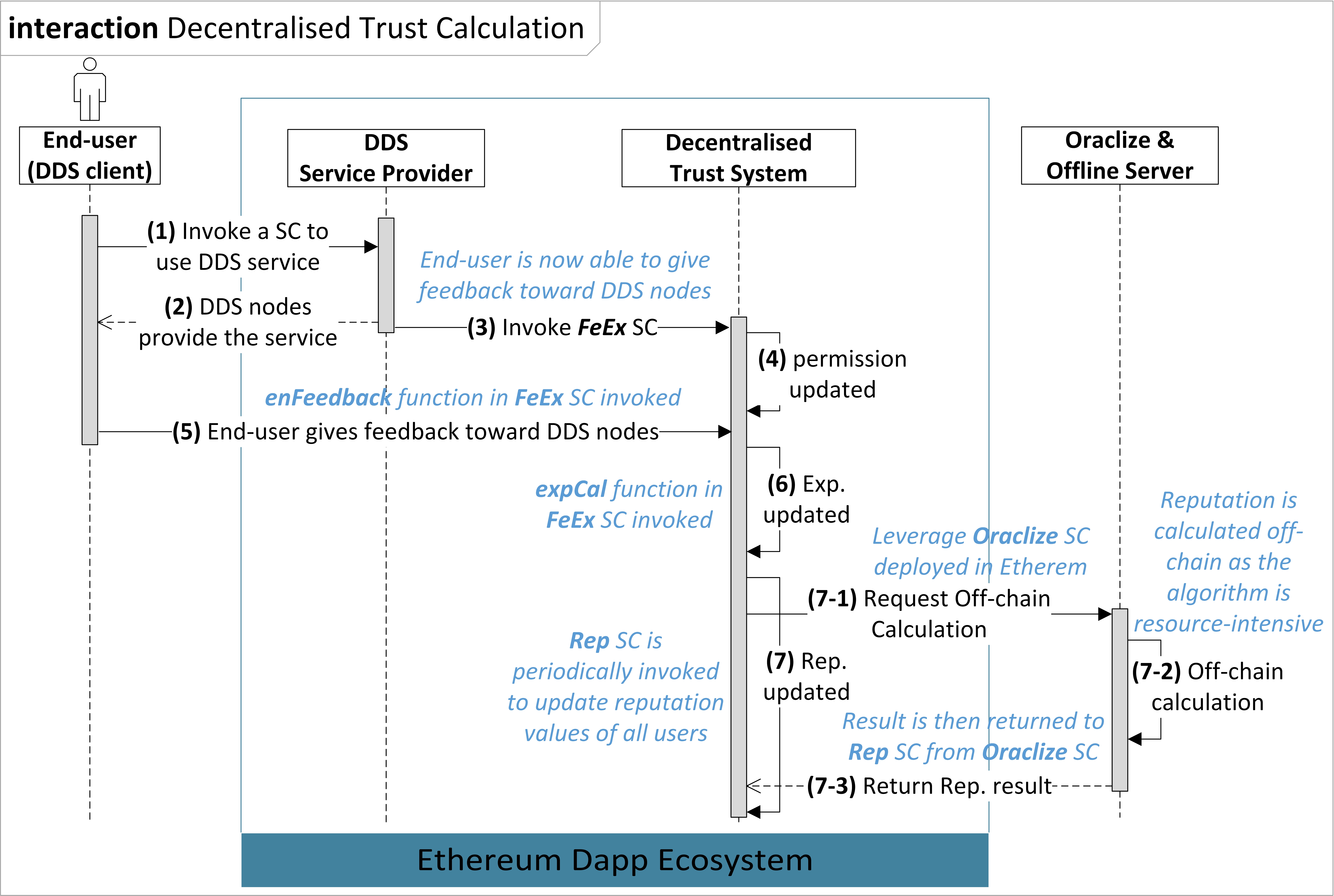}
		\caption{Sequence diagram of how the decentralised trust system is incorporated with the DDS service and how the proposed \textit{DER} trust calculation is performed}
\label{fig7}
\end{figure}

\subsection{Feedback and Experience Smart Contract}
This SC, denoted as \textit{FeEx}, contains feedback information and experience relationship of any entity $A$ (i.e., a DDS client) toward entity $B$ (an IPFS storage node) where a transaction between $A$ and $B$ has been carried out (i.e., $A$ uses the DDS service provided by $B$ depicted by step \textbf{(1)} and \textbf{(2)} in Fig. \ref{fig7}). $FeEx$ SC also provides functions for end-users to give feedback and to update experience relationships accordingly. Note that $A$ and $B$ are identified by Ethereum $address$ in the ecosystem.

\subsubsection{Ledger Data Model}
Necessary information about users' feedback and experience relationships is permanently recorded on-chain using state variables defined in \textit{FeEx} SCs. These state variables are as a public \textit{distributed ledger} comprised of the full history of \textit{state transitions} of all experience relationships between any two entities. It is convenience to obtain the latest information of any experience relationship as Ethereum supports \textit{key-value} data format and the latest state of the ledger (recording the most recent experience relationships information) can be found in the most recent block.

\textit{FeEx} SC stores a state variable, called \textit{FeExInfo}, in its contract storage in form of nested key-value pairs using Ethereum built-in $mapping$ type as follows:

\begin{small}
\begin{verbatim}
struct FeExStrut {
    uint expValue;
    uint fbScore;
    bool perFlag;
}
mapping (address=>mapping (address=>FeExStrut))
    public FeExInfo;
\end{verbatim}
\end{small}

\textit{FeExInfo} consists of information about the relationship from $A$ toward $B$, specified in \textit{FeExStrut} data structure: \textit{(ii)} $Exp(A,B)$ value, \textit{(iii)} feedback score, and \textit{(iv)} a flag indicating whether $A$ has permission to give $B$ feedback. Any parties or SCs can easily access \textit{FeExInfo} recorded on-chain to obtain desired information for their purposes.

\subsubsection{Functionality}
The \textit{FeEx} SC contains two main functions: \textit{(i)} \textit{enFeedback} enables/revokes permission of a data owner $A$ to give feedback to storage node $B$ by updating the permission flag in \textit{FeExInfo} with associated transaction ID; and \textit{(ii)} $expCal$ calculates $Exp(A,B)$ value and updates \textit{FeExInfo} whenever $A$ gives feedback to $B$. The \textit{enFeedback} function is called by by an SC of the DDS service once a transaction has been carried out (illustrated by step \textbf{(3)} in Fig. \ref{fig7}).

The $expCal$ implements the experience calculation function following Algorithm \ref{alg_exp} proposed in Section 4.1. It is worth noting that as there is no global time server synchronised among nodes in the Ethereum BC platform so that the implementation of the decay model is not straightforward. To circumvent this challenge, $expCal$ determines $time$ in Algorithm \ref{alg_exp} using block height ($block.number$ property) so that $Exp(A,B)$ decays every a number of blocks if no transaction occurred between $A$ and $B$ during the period.

\subsection{Reputation Smart Contract}
\subsubsection{Ledger Data Model}
Reputation SC, denoted as $Rep$, records positive reputation and negative reputation of all users (e.g., IPFS storage nodes) using two state variables \textit{RepPosInfo} and \textit{RepNegInfo}, respectively. The data model for the two state variables is a mapping between a user' address and a value:
\begin{small}
\begin{verbatim}
mapping (address => uint)
    public RepPosInfo;
mapping (address => uint)
    public RepNegInfo;
\end{verbatim}
\end{small}

These two state variables play the role of a public \textit{distributed ledger} permanently recording a full history of \textit{state transitions} of the positive and negative reputation of all users.

\subsubsection{Functionality}
The reputation calculation algorithm (Algorithm \ref{alg_rep}) performs matrix multiplication with numerous iterations that requires a large number of operations and local variable manipulations. Consequently, the resource-consumption and the $gas$ cost for executing this algorithm on-chain are extremely high, which is infeasible to be implemented in $Rep$ SC. To bypass this challenge, off-chain storage and calculations appear as a promising solution. The catalyst of this solution is that high-volume data and resource-intensive tasks should be stored and processed off-chain; only results of the off-chain tasks are piggybacked for on-chain ledgers and/or calculations. However, as an SC must be \textit{deterministically} executed, there might be a room for ambiguity if SC executions rely on information from off-chain sources. In addition, this practice could turn a decentralised system into a centralised one due to the dependency on an external source of information. This dilemma is known under the term: \textit{"Oracle problem"} \cite{rs03}. The following section will describe in detail how $Rep$ SC can accomplish the off-chain reputation calculation while mitigating the Oracle problem.

\subsection{Off-chain Computation for Reputation}
Oracle problem could be mitigated by leveraging a decentralised trusted provider to feed required data into SCs. For instance, Oraclize\footnote{\url{https://docs.provable.xyz/}} deploys an SC on Ethereum platform as an API for other SCs to interact with the \textit{outside world}\footnote{\url{https://github.com/provable-things/ethereum-api/blob/master/oraclizeAPI\_0.4.sol}}. The Oraclize SC works as a bearer that gets required data from an external source and delivers the data to the requested SCs in a decentralised fashion. Furthermore, to alleviate the ambiguity, it \textit{(ii)} provides \textit{authenticity proof} as an assurance for data integrity. In the implementation, we follow this Oraclize solution to calculate users' reputations off-chain.

Assume that there is already an off-chain server, called \textit{RepCalService}, that implements Algorithm \ref{alg_rep} to calculate positive and negative reputations and provides an API (e.g., REST API) to retrieve the calculation results. The implementation of this off-chain service is straightforward: it queries the Ethereum BC to obtain experience relationships stored in \textit{FeExInfo} and the current reputations values from \textit{RepPosInfo} and \textit{RepNegInfo} state variables as inputs for Algorithm \ref{alg_rep}. $Rep$ SC then periodically calls this service to update the reputation values in a decentralised fashion using Oraclize solution. The below implementation reference shows how to execute these tasks. Specifically, $Rep$ interacts with the Oraclize service by importing the Oraclize SC (i.e., \textit{provableAPI.sol}) to make a query to \textit{RepCalService} using \textit{oraclizeQuery()} function. A \textit{callback} function also needs to be implemented in order to get the results from the query and to update \textit{RepPosInfo} and \textit{RepNegInfo} accordingly.

\begin{small}
\begin{verbatim}
import "./provableAPI.sol";
contract Rep is usingProvable {
  function oraclizeQuery() {
    // make an Oraclize query to the service using URL
    oraclize_query("URL", RepCalService_API_URL);
  }
  
  function __callback(bytes32 _requestID, string _result) {
   // only Oraclize is permitted to invoke the function
   require (msg.sender == oraclize_cbAddress());
   
   // update RepPosInfo & RepNegInfo
   RepPosInfo[addr] = getRepPos(_result, addr);
   RepPosInfo[addr] = getRepNeg(_result, addr);
  }
}
\end{verbatim}
\end{small}

\subsection{Integration of DDS service and Trust System}
Supposedly, the DDS service implements some SCs for data storage business logic between data owners and storage nodes, which is out of the scope of this paper. The main focus of the paper is that once a transaction has been accomplished between a client and an IPFS storage node, the \textit{enFeedback} function in the $FeEx$ is invoked that enables the owner to give feedback to its counterpart, which will establish experience and trust relationships (step \textbf{(2)} in Fig. \ref{fig7}). For this reason, a DDS SC (i.e., the caller SC) defines an interface of $FeEx$ SC (i.e., the callee SC) and calls it with the callee's contract address as demonstrated as follows:

\begin{small}
\begin{verbatim}
contract DDS {
  function ePayment(address _storageNode,
    unit _amount, string _datahash) {
    ...
    if (success) {
      //call FeEx using deployed address scAddr
      FeEx fe = FeEx(scAddr);
      fe.enFeedback(msg.sender, _storageNode,
      string _transID);
    }
  }
}
contract FeEx {
  function enFeedback(address _owner,
    address _storageNode, string _transID);
  function expCal(address _owner, uint _fbScore,
    address _storageNode, string _transID);
}
\end{verbatim}
\end{small}

Similarly, when a data owner gives feedback toward a storage node (with value $fbScore$), DDS invokes $expCal$ function that calculates the experience relationship between the two and updates \textit{FeExInfo} accordingly. In the demonstration, feedback scores are randomly generated; however, in the real-world scenarios, a function to measure DDS QoS shall be implemented to correctly reflect the service quality. As Solidity supports interactions between SCs deployed on Ethereum platform, the proposed trust system is feasibly actualised as any DApps including DDS can be incorporated by invoking public functions or accessing trust-related information from state variables defined in the SCs of the proposed trust system.

Finally, to reinforce service quality for a client, the DDS service queries \textit{RepPosInfo}, \textit{RepNegInfo} and \textit{FeExInfo} stored at $FeEx$ and $Rep$ SCs, respectively, to receive reputation and experience values related to this client. The DDS service then aggregates this information for finalising trust values between the client and the storage nodes and provides the most trustworthy counterparts to the client.
\section{System Analysis and Discussion}
The demonstration system presented in Section \ref{imp_section} is a proof-of-concept of a universal decentralised trust system which is incorporated into a BC infrastructure as an underlying service for supporting DApps. This section investigates and discusses on the practicality, performance, and  security-related aspects of the proposed trust system.

\subsection{Feasibility and Performance Evaluation}
Practically, a variety of factors should be taken into account when deploying the trust system into real-world usages. For instance, $gas$ cost for SC execution in Ethereum Virtual Machine is high as such SCs requires high volume storage for the state variables, as well as numerous operations and local variable manipulations in $FeEx$ SC and the cost for using \textit{Oraclize} service in $Rep$ SC. This calls for further research on SC optimisation \cite{chen2017scoptimise} and better off-chain storage and calculation solutions.

As most of SCs, including $FeEx$ and $Rep$ SCs, are dedicated to performing critical tasks with minimal storage and computation, the performance of a DApp is heavily dependent on the BC platform but not the application built on top. At present, permissionless BC platforms offer limited performance in terms of both throughput and/or scalability. For instance, Bitcoin and Ethereum main-net only handle about $7$ and $15$ transactions per second\footnote{\url{https://blockchain.info/charts/n-transactions}}). In order to illustrate the real-world performance, we deploy our system to different BC platforms, i.e., Ethereum test-nets namely \textit{Ropsten}, \textit{Kovan}, \textit{Rinkeby}, and \textit{Goerli}. We carry out latency measurement of both \textit{READ} and \textit{WRITE} transactions to the ledger \textit{FeExInfo} in the \textit{FeEx} SC in the four test-nets. The results are shown in Fig. \ref{fig9}. The performance measurement script can also be found at the same repo\footnote{\url{https://github.com/nguyentb/Decentralised\_Trust\_Eth\_IPFS/tree/master/packages/performanceAnalysis}}.

\begin{figure}[!ht]
\centering
	\includegraphics[width=\columnwidth]{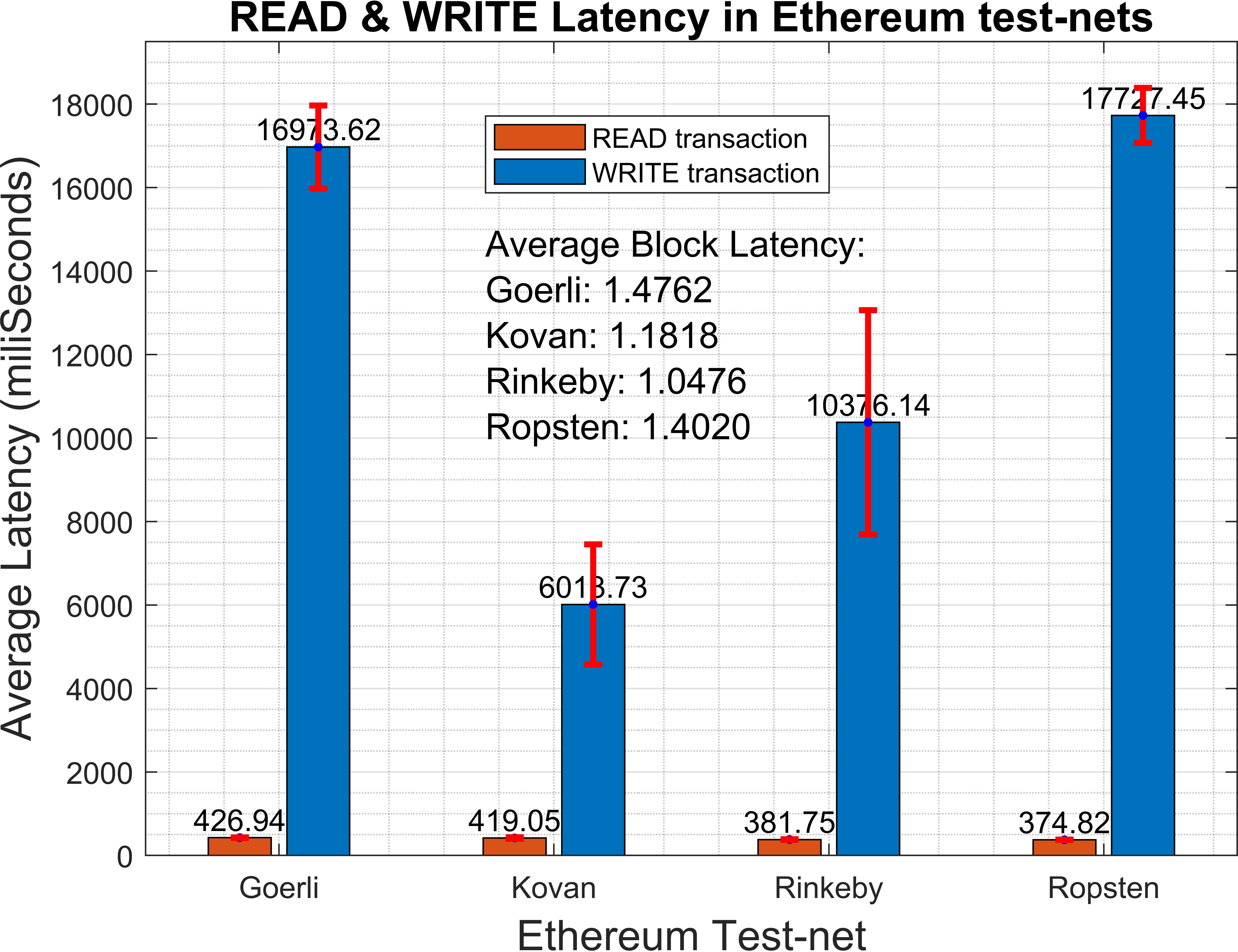}
		\caption{Latency of READ and WRITE from/to Smart Contracts in Ethereum test-nets}
\label{fig9}
\end{figure}

It is worth noting that in \textit{READ} transactions, an Ethereum platform does not perform the consensus mechanism; instead, in \textit{WRITE} transactions, consensus mechanism (i.e., Proof-of-Work (Ethash) in \textit{Ropsten}, Proof of Authority (Authority Round) in \textit{Kovan}, Proof of Authority (Clique) in both \textit{Rinkeby}, and \textit{Goerli}) is carried out as the state of the ledger is changed. In details, \textit{WRITE} transactions require further complicated processes including block formulation and mining, broadcast the mined block to peers in the network, block verification, and updating the ledger. This is why the latency of \textit{READ} transactions is much smaller than \textit{WRITE} transactions, reassured by the results in Fig. \ref{fig9}. As can be seen in the figure, the average latency of \textit{READ} transactions is roughly the same in all four test-nets at around \textit{350-420ms} with relatively small standard deviations. This indicates the consistency when querying data from the ledger. Compared to \textit{READ} transactions, the average latency in \textit{WRITE} transactions is significantly risen to $6013$, $10376$, $16973$, and $17727$ $ms$, which is $15$ to $42$ times higher, in \textit{Kovan}, \textit{Rinkeby}, \textit{Goerli}, and \textit{Ropsten}, respectively. The standard deviations, however, are different in the four test-nets: \textit{Ropsten} and \textit{Goerli} introduce considerably higher \textit{WRITE} latency compared to Kovan and Rinkeby ($2-3$ times) but \textit{WRITE} transactions are more stable as the standard deviations are small. Particularly, in \textit{Rinkeby} test-net, the standard deviation is substantially high - The latency spreads out in a wide range, from \textit{4500} to \textit{17350 ms}.

Results also show the \textit{block latency}\footnote{The number of blocks increase counted when a transaction is broadcasted to the network until it is confirmed (written in the latest block).} in \textit{WRITE} transactions in the four test-nets. In Kovan and Rinkeby, \textit{WRITE} transactions are almost appended and confirmed in the next block demonstrated by block latency is close to $1$ whereas in Goerli and Ropsten, it could take one or two more blocks before the transaction is written onto a new block. This is probably one of the reasons that the latency in \textit{Goerli} and \textit{Ropsten} is higher than in \textit{Kovan} and \textit{Rinkeby}.

Results of the system latency indicate the technical barrier on Ethereum-based system performance, which limits the usability of the proposed decentralised trust system to serve only small-scale services. Note that unlike the other test-nets, Ropsten performs \textit{Proof-of-Work} consensus mechanism, similar with the Ethereum main-net, thus, it best reproduces the Ethereum production environment. Nevertheless, besides SC optimisation for individual DApp, system performance immensely relies on an underlying BC network which requires further research on consensus mechanisms \cite{zheng2017overview}, off-chain \cite{poon2016bitcoin} and sharding solutions \cite{zamani2018rapidchain}, etc. for a better DApp ecosystem.

\subsection{System Security}
The advanced capability of BC platform plays a key role in providing a secure and trustworthy environment for DApps. Although current BC and SC technologies still pose both performance limitations and security threats, we assume that the decentralised nature of the BC ensures there is no adversary can corrupt the BC network and change the content of the ledgers as this would imply majority of the network's resources are compromised. Besides, there is no adversary who can impersonate another entity as the public-key cryptography (e.g., Elliptic Curve Digital Signature Algorithm (ECDSA) used in Ethereum) cannot be forged.

Security threats in our proposed decentralised trust system are from typical reputation-related attacks such as Self-promoting, Slandering (good/bad mouthing), and Whitewashing \cite{hoffman2009survey}. In our system, in order to be able to provide feedback, entity is required to make a transaction toward the counter-party, which costs some fee, at least transaction fee. Importantly, the proposed reputation mechanism itself can mitigate such reputation attacks. For instance, if a newly-created entity (thus its reputation value is minimal), makes a transaction, and then gives bad/good feedback toward a victim; the contribution of this feedback to the reputation value of the victim is minimal. This is because the reputation value of the victim is calculated based on both experience and reputation score of participants who transact with the victim (indicated in Equation (6) and (7) ). Obviously, if an entity is high-reputed (thus, probably not malicious) then the contribution (to one's reputation) is huge. Generally, our reputation mechanism shares the same characteristics to Page-rank algorithm in Google web-ranking engine: it is not easy to increase the ranking of a web-page by creating lots of new web-pages and link to it \cite{avrachenkov2006effect}. The nature of any feedback-based reputation systems is that it is impossible to fully prevent from such reputation attacks; however, we believe our approach can well mitigate these behaviours.
\section{Conclusion}
In this paper, we have provided a comprehensive concept, system model and design of a decentralised trust system for DApps ecosystem along with detailed analysis, algorithms, and simulations actualise the \textit{DER} trust model. Foremost, we have developed a proof-of-concept system implementing the \textit{DER} trust model on top of the Ethereum permissionless BC. The trust system is then able to incorporate with the DDS service for supporting data owners to select trustworthy storage nodes.

We have also provided technical difficulties along with prospective solutions as well as the implementation reference in the development of the proposed decentralised trust system. Existing technical barriers are also outlined which need further efforts to be successfully solved. We believe our research significantly contributes to further activities on trust-related research areas and open some future research directions to strengthen a trustworthy DApp ecosystem.

\section*{Acknowledgement}
This research was supported by the HNA Research Centre for Future Data Ecosystems at Imperial College London and the Innovative Medicines Initiative 2 IDEA-FAST project under grant agreement No 853981.

\bibliographystyle{cas-model2-names}

\bibliography{refs}

\bio{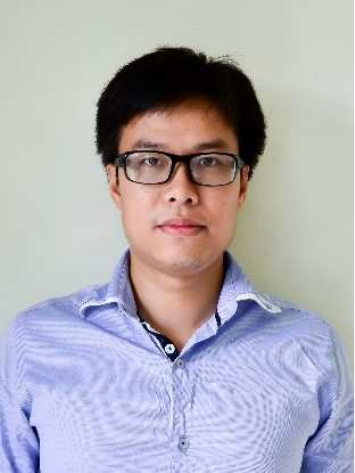}
Dr. Nguyen B.Truong is currently a Research Associate at Data Science Institute, Imperial College London, United Kingdom. He received his Ph.D, MSc, and BSc degrees from Liverpool John Moores University, United Kingdom, Pohang University of Science and Technology, Korea, and Hanoi University of Science and Technology, Vietnam in 2018, 2013, and 2008, respectively. He was a Software Engineer at DASAN Networks, a leading company on Networking Products and Services in South Korea in 2012-2015. His research interest is including, but not limited to, Data Privacy, Security, and Trust, Personal Data Management, Distributed Systems, and Blockchain.
\endbio

\bio{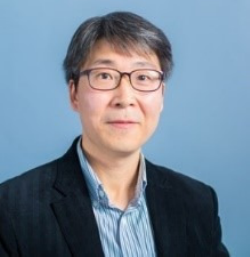}
Dr. Gyu Myoung Lee received his BS degree from Hong Ik University and MS, and PhD degrees from the Korea Advanced Institute of Science and Technology (KAIST), Korea, in 1999, 2000 and 2007, respectively. He is a Professor at Department of Computer Science, Liverpool John Moores University, UK. He is also with KAIST as an adjunct professor. His research interests include Future Networks, IoT, and multimedia services. He has actively contributed to standardization in ITU-T as a Rapporteur, oneM2M and IETF. He is chair of the ITU-T Focus Group on data processing and management to support IoT and Smart Cities \& Communities.
\endbio

\bio{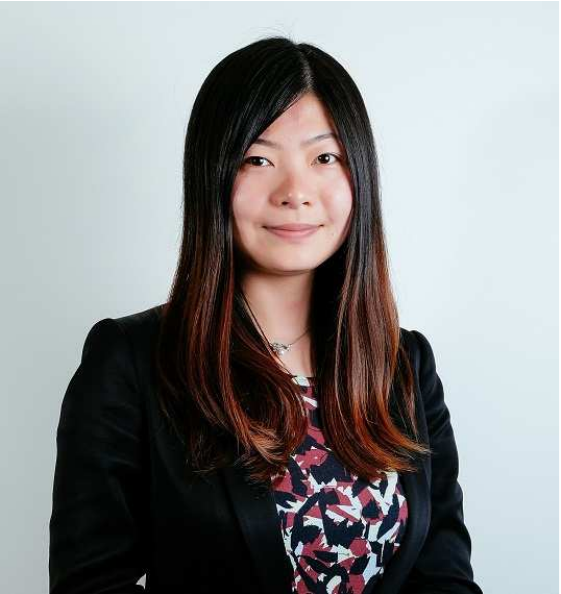}
Dr. Kai Sun is the Operation Manager of the Data Science Institute at Imperial College London. She received the MSc degree and the Ph.D degree in Computing from Imperial College London, in 2010 and 2014, respectively. From 2014 to 2017, she was a Research Associate at the Data Science Institute at Imperial College London, working on EU IMI projects including U-BIOPRED and eTRIKS, responsible for translational data management and analysis. She was the manager of the HNA Centre of Future Data Ecosystem in 2017-2018. Her research interests include translational research management, network analysis and decentralised systems.
\endbio

\bio{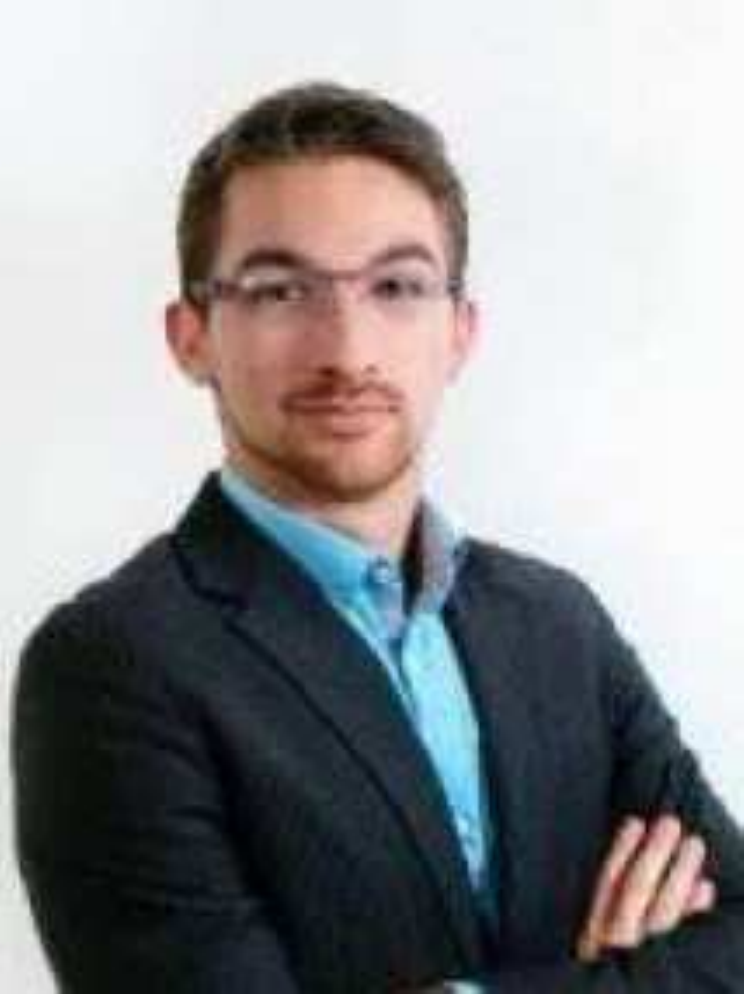}
Mr. Florian Guitton received a BSc in Software Engineering from Epitech (France) in 2011 and a MSc in Advanced Computing from the University of Kent (United Kingdom) in 2012. In 2012 he joined the Discovery Sciences Group at Imperial College London where he became Research Assistant working on iHealth, eTRIKS and IDEA-FAST EU programs. He is currently a PhD candidate at Data Science Institute, Imperial College London working on distributed data collection and analysis pipeline in mixed-security environments with the angle of optimising user facing experiences.
\endbio

\bio{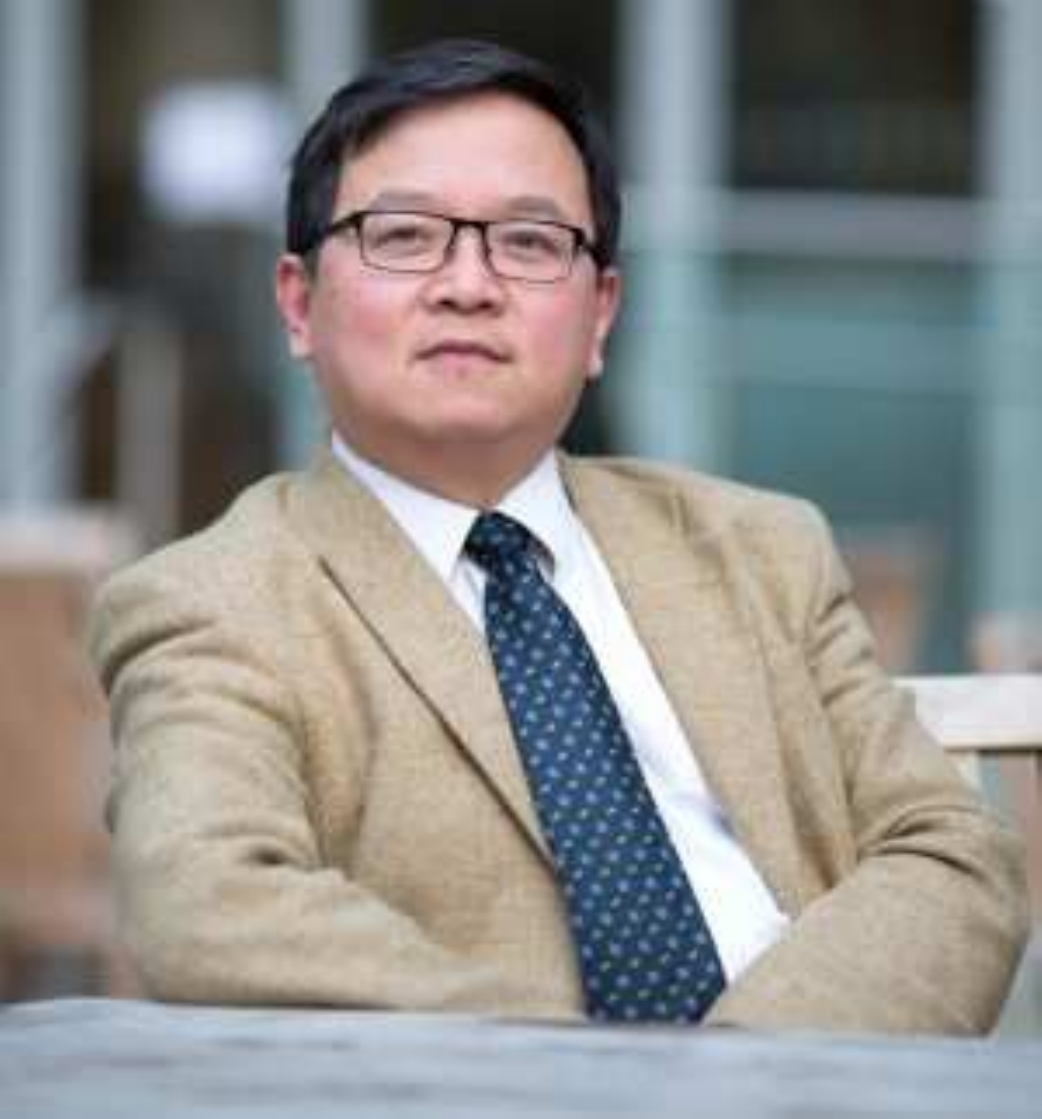}
Dr. Yike Guo (FREng, MAE) is the director of the Data Science Institute at Imperial College London and the Vice-President (Research and Development) of Hong Kong Baptist University. He received the BSc degree in Computing Science from Tsinghua University, China, in 1985 and received the Ph.D in Computational Logic from Imperial College London in 1993. He is a Professor of Computing Science in the Department of Computing at Imperial College London since 2002. He is a fellow of the Royal Academy of Engineering and a member of the Academia Europaea. His research interests are in the areas of data mining for large-scale scientific applications including distributed data mining methods, machine learning and informatics systems.
\endbio

\end{document}